\newtheorem{lemma}{Lemma}
\newtheorem{proposition}{Proposition}
\theoremstyle{remark}
\theoremstyle{definition}
\begin{document}

\title{The Radial Wave Operator in Similarity Coordinates}

\author{Roland Donninger}
\email{roland.donninger@univie.ac.at}
\affiliation{Faculty of Physics, Gravitational Physics,  
University of Vienna,  
Boltzmanngasse 5, A-1090 Wien}

\begin{abstract}We present a rigorous functional analytic setting to study the radial
wave equation in similarity coordinates. 
As an application we analyse linear
stability of the fundamental self--similar solution of 
the wave equation with a focusing power
nonlinearity.
This is part one in a series of papers, see Ref. \onlinecite{ich1}, Ref. \onlinecite{ich2}. 
\end{abstract}

\maketitle

\section{Introduction}

\subsection{Motivation}
One of the most interesting properties of nonlinear wave equations is
the development of singularities in finite time for solutions starting from
smooth initial data.
In a typical situation one observes that small initial data lead to dispersion
whereas for large data the solution ceases to exist after a finite time by
forming a singularity.
An interesting mathematical question in this respect concerns the nature of the
breakdown for generic large data solutions.
It turns out that singularity formation is often connected to the 
existence of special
solutions.
In particular, for certain systems it has been observed that the breakdown
occurs via a self--similar solution: Generic large data evolutions approach a
certain self--similar solution and blow up.
Such a knowledge is often based on numerical studies.
We remark that the convergence to the special solution is 
due to dispersion
and can in general only hold in a local sense since the total energy is
conserved.
Obviously, the stability of special solutions plays a crucial role for
understanding the dynamics. 

As a first step in the analysis of such a system one studies linear 
stability of the solution which is expected to act as an attractor.
However, unlike in the case of static solutions, the linearization around a
self--similar solution yields an equation which 
depends explicitly on
time.
An elegant way to overcome this difficulty is to introduce
adapted coordinates which transform the problem of self--similar blow up into an
asymptotic stability problem. 
Such similarity coordinates have been successfully used 
by various authors (e.g. Ref. \onlinecite{Bizon1}, Ref. \onlinecite{Merle1}, Ref. \onlinecite{Merle2}, 
Ref. \onlinecite{Merle3}, Ref. \onlinecite{Merle4}, Ref. \onlinecite{Galaktionov1})
to study
self--similar blow up behaviour for the semilinear wave equation. 

The main motivation of the present paper is to obtain a systematic 
functional analytic initial value formulation of the wave equation in similarity
coordinates.
For simplicity we restrict ourselves to the radial case.
The main difficulty one encounters is the fact that the spatial part of the
wave operator in similarity coordinates is not normal.
Therefore, standard spectral theory is not applicable.
However, we are still able to obtain a well--posed initial value formulation by
means of semigroup theory.
This approach leads to interesting non self--adjoint spectral problems.
In particular, we show that the generator of the semigroup possesses a continuum of eigenvalues
filling a complex left half plane.
As an application we consider the wave equation with a power
nonlinearity where we study the linearization around the fundamental
self--similar solution and show that it is linearly stable.
This is achieved by a detailed study of the spectrum of the associated linear
operator.
This operator has exactly one unstable eigenvalue which is shown to emerge from
the time translation symmetry of the orginal problem.
We get rid of this instability by defining an appropriate 
projection that
removes the unstable eigenvalue from the spectrum.
Then we show that the resulting operator generates a semigroup and by analysing
its resolvent we deduce
appropriate growth bounds to conclude linear stability of the fundamental
self--similar solution.
The framework developed in the present work is fundamental for the follow--up papers Ref. \onlinecite{ich1} and Ref. \onlinecite{ich2} where the full nonlinear stability of the fundamental self--similar solution is proved.

\subsection{Notations}
Throughout this paper we will make use of the following notations. 
For a Hilbert space
$H$ we denote by $\mathcal{B}(H)$ the Banach space of linear bounded operators
on $H$. For a closed linear operator $L: \mathcal{D}(L) \subset H \to H$ we set
$R_L(\lambda):=(\lambda-L)^{-1}$ if $(\lambda-L)$ is bijective.
The spectrum $\sigma(L)$ of $L$ is decomposed as $\sigma(L)=\sigma_p(L) \cup
\sigma_c(L) \cup \sigma_r(L)$ where 
\begin{itemize}
\item $\sigma_p(L)$ is the set of all eigenvalues of $L$,
\item $\lambda \in \sigma_c(L)$ if $\lambda-L$ is injective but not surjective
and the range of $\lambda-L$ is dense in $H$,
\item $\lambda \in \sigma_r(L)$ if $\lambda-L$ is injective and the range of
$\lambda-L$ is not dense in $H$.
\end{itemize}
The resolvent set of $L$ is denoted by $\rho(L)$.
In order to improve readability, we write vectors as bold face letters and
the individual components are numbered using lower indices, e.g. 
$\mathbf{u}=(u_1,u_2)^T$.
Finally, the letter $C$ stands for a positive constant, however, this constant
is not supposed to have the same value at each occurence.

\section{The Free Wave Equation}

\subsection{Similarity Coordinates}
\label{sec_sim}
Consider the free wave equation in spherical symmetry, i.e.
$$ \tilde{\psi}_{tt}-\tilde{\psi}_{rr}-\frac{2}{r}\tilde{\psi}_r=0 $$
together with initial data $\tilde{\psi}(0,\cdot), \tilde{\psi}_t(0,\cdot)$.
The simple transformation $\tilde{\psi} \mapsto \psi$ where
$\psi(t,r):=r\tilde{\psi}(t,r)$ yields the one--dimensional wave equation 
$\psi_{tt}-\psi_{rr}=0 $ 
together with the regularity condition $\psi(t,0)=0$ for all $t$.
Our aim is to study this equation in the coordinate system $(\tau, \rho)$, 
defined by $\tau:=-\log(T-t)$ and $\rho:=\frac{r}{T-t}$, which is adapted to 
self--similarity and we restrict ourselves to the backward lightcone of the
point $(T,0)$ for a $T>0$.
Note that the transformation $(t,r) \mapsto (\tau,\rho)$ maps the
cone $\{(t,r): 0\leq t \leq T, 0 \leq r \leq T-t\}$ to the cylinder 
$\{(\tau,\rho): \tau \geq -\log T, \rho \in
[0,1]\}$ and thus, $\tau \to \infty$ corresponds to $t \to T-$.
The inverse transformation is given by $t=T-e^{-\tau}$ and $r=\rho e^{-\tau}$.
Furthermore, we note that the derivatives are given by $\partial_t=e^\tau
(\partial_\tau +\rho \partial_\rho)$ and $\partial_r=e^\tau \partial_\rho$.
We introduce the new dependent variables 
$\psi_1:=\psi_t$ and $\psi_2:=\psi_r$.
Then, $\Psi:=(\psi_1,\psi_2)^T$ satisfies
$$ \partial_t \Psi=\left ( \begin{array}{cc}
0 & 1 \\
1 & 0 \end{array} \right ) \partial_r \Psi $$
with the regularity condition $\psi_1(t,0)=0$ for all $t$.
In similarity coordinates $(\tau,\rho)$ this system reads
\begin{equation}
\label{eq_evol}
\partial_{\tau} \Phi=\left ( \begin{array}{cc}
-\rho & 1 \\
1 & -\rho \end{array} \right ) \partial_{\rho} 
\Phi
\end{equation}
where $\Phi(\tau,\rho)=\Psi(T-e^{-\tau}, \rho e^{-\tau})$.
The field $\phi(\tau,\rho):=\psi(T-e^{-\tau},\rho e^{-\tau})$ can be 
reconstructed from $\phi_2$ by noting that
$\phi_\rho(\tau, \rho)=e^{-\tau} \phi_2(\tau,\rho)$ which
yields
\begin{equation}
\label{eq_phi}
\phi(\tau,\rho)=e^{-\tau} \int_{0}^\rho \phi_2(\tau,\xi)d\xi.
\end{equation}

Now suppose we are able to deduce a growth bound for $\Phi$ of the form
$\|\Phi(\tau,\cdot)\| \leq Ce^{\mu \tau}$ for a $\mu \in \mathbb{R}$ where 
$\|\cdot\|$ is the norm on
$L^2(0,1)^2$.
What does this imply for the original system?
Consider the energy of the original field $\tilde{\psi}$ in
the backward lightcone of the point $(T,0)$ which is given by
$$ \int_0^{T-t} \left [ \psi_t^2(t,r)+\left (\psi_r(t,r)
-\frac{\psi(t,r)}{r}
\right )^2 \right ]dr $$
in terms of $\psi$.
Switching to similarity coordinates $(\tau,\rho)$ this expression reads
$$ E_\phi(\tau):=e^{-\tau} \int_0^1 \left [ \phi_1^2(\tau,\rho)
+\left (\phi_2(\tau,\rho)- \frac{e^\tau}{\rho}\phi(\tau,\rho)
\right )^2 \right ]d\rho. $$
Applying the Hardy inequality to Eq. (\ref{eq_phi}) we conclude that 
$$ \int_0^1 \left | \frac{\phi(\tau, \rho)}{\rho} \right |^2 d\rho \leq C e^{-2 \tau}\|\phi_2(\tau,\cdot)\|^2_{L^2(0,1)} $$
and the growth bound $\|\Phi(\tau,\cdot)\|
\leq Ce^{\mu \tau}$ implies 
$E_\phi(\tau)\leq
Ce^{2(\mu-\frac{1}{2})\tau}$.
Thus, if $\mu \leq \frac{1}{2}$, $E_\phi(\tau)$ stays bounded 
for all $\tau$ and the zero solution is
stable in the energy space.
Of course, for the free wave equation this stability is trivial,
however, for our purposes it is useful to "reprove" it by deducing an
appropriate growth bound for $\Phi$ which we are going to do in the following.

Furthermore, from 
$\|\phi_2(\tau,\cdot)\|_{L^2(0,1)}\leq
Ce^{\mu \tau}$ and Eq. (\ref{eq_phi}) we conclude
$$ |\phi(\tau,\rho)| \leq e^{-\tau}\|\phi_2(\tau,\cdot)\|_{L^2(0,1)}\leq
Ce^{(\mu-1)\tau} $$
by Cauchy--Schwarz and thus, $\|\phi(\tau,\cdot)\|_{L^\infty(0,1)} \leq
Ce^{(\mu-1)\tau}$.
Therefore, the zero solution is \emph{asymptotically} stable in the $L^\infty$
norm if $\mu<1$.
In the original coordinates $(t,r)$ this estimate translates into
$\|\psi(t,\cdot)\|_{L^\infty(0,T-t)}\leq C(T-t)^{1-\mu}$.

\subsection{Operator Formulation and Generation of Semigroup}
\label{sec_opform}
In order to deduce growth bounds for $\Phi$ we apply semigroup theory, i.e. we
derive an operator formulation for Eq. (\ref{eq_evol}).
Let $\mathcal{H}:=L^2(0,1)^2$ and define $\mathcal{D}(\tilde{L}_0):=\{\mathbf{u} \in
C^1[0,1]^2: u_1(0)=0\}$, $\tilde{L}_0 \mathbf{u}:=(-\rho u_1'+u_2', u_1'-\rho u_2')^T$.
Then $\tilde{L}_0: \mathcal{D}(\tilde{L}_0)\subset \mathcal{H} \to \mathcal{H}$ is a densely
defined linear operator.
The inner product and the norm on $\mathcal{H}$ are denoted by $(\cdot | \cdot)$
and $\|\cdot\|$, respectively.
Thus, 
\begin{equation}
\label{eq_evolop}
\frac{d}{d\tau} \Phi(\tau)=\tilde{L}_0 \Phi(\tau) 
\end{equation}
for a function $\Phi: [\tau_0,\infty) \to \mathcal{H}$ is an operator formulation of 
Eq. (\ref{eq_evol}) where $\tau_0:=-\log T$, i.e. $\tau=\tau_0$ corresponds to
$t=0$.
In what follows we prove that the closure of $\tilde{L}_0$ generates a strongly
continuous one--parameter semigroup $S_0: [0,\infty) \to \mathcal{B}(\mathcal{H})$
and thus, the unique solution of Eq. (\ref{eq_evolop}) is given by
$\Phi(\tau+\tau_0)=S_0(\tau)\Phi(\tau_0)$.

\begin{lemma}
\label{lem_qaccr}
The operator $\tilde{L}_0$ satisfies $\mathrm{Re}(\tilde{L}_0 \mathbf{u}|\mathbf{u})\leq
\frac{1}{2}\|\mathbf{u}\|^2$ for all $\mathbf{u} \in \mathcal{D}(\tilde{L}_0)$.
\end{lemma}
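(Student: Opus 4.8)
The plan is to compute $\mathrm{Re}(\tilde L_0\mathbf{u}|\mathbf{u})$ directly from the definition of the inner product on $\mathcal{H}=L^2(0,1)^2$ and integrate by parts, using the boundary condition $u_1(0)=0$ to control the boundary terms. Writing out the components, for $\mathbf{u}=(u_1,u_2)^T\in\mathcal{D}(\tilde L_0)$ we have
\begin{equation*}
(\tilde L_0\mathbf{u}|\mathbf{u})=\int_0^1\bigl[(-\rho u_1'+u_2')\overline{u_1}+(u_1'-\rho u_2')\overline{u_2}\bigr]d\rho.
\end{equation*}
I would reorganize the integrand by grouping the $\rho$-terms and the cross-terms separately: $-\rho(u_1'\overline{u_1}+u_2'\overline{u_2})+(u_2'\overline{u_1}+u_1'\overline{u_2})$. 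Taking real parts, the first group gives $-\tfrac{\rho}{2}(|u_1|^2+|u_2|^2)'$ and the second gives $(\mathrm{Re}(u_1\overline{u_2}))'$, so that
\begin{equation*}
\mathrm{Re}(\tilde L_0\mathbf{u}|\mathbf{u})=\int_0^1\Bigl[-\tfrac{\rho}{2}\bigl(|u_1|^2+|u_2|^2\bigr)'+\bigl(\mathrm{Re}(u_1\overline{u_2})\bigr)'\Bigr]d\rho.
\end{equation*}

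Now integrate by parts. For the first term, $\int_0^1 -\tfrac{\rho}{2}(|u_1|^2+|u_2|^2)'\,d\rho=-\tfrac{1}{2}(|u_1(1)|^2+|u_2(1)|^2)+\tfrac{1}{2}\int_0^1(|u_1|^2+|u_2|^2)\,d\rho$; note the boundary term at $\rho=0$ vanishes because of the explicit factor $\rho$. The $\tfrac12\int_0^1(\cdots)$ piece is exactly $\tfrac12\|\mathbf{u}\|^2$. For the second term, $\int_0^1(\mathrm{Re}(u_1\overline{u_2}))'\,d\rho=\mathrm{Re}(u_1(1)\overline{u_2(1)})-\mathrm{Re}(u_1(0)\overline{u_2(0)})$, and the boundary term at $\rho=0$ vanishes because $u_1(0)=0$. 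Collecting the surviving boundary terms at $\rho=1$ gives
\begin{equation*}
\mathrm{Re}(\tilde L_0\mathbf{u}|\mathbf{u})=\tfrac12\|\mathbf{u}\|^2-\tfrac12|u_1(1)|^2-\tfrac12|u_2(1)|^2+\mathrm{Re}(u_1(1)\overline{u_2(1)}).
\end{equation*}
The key observation is that $-\tfrac12|u_1(1)|^2-\tfrac12|u_2(1)|^2+\mathrm{Re}(u_1(1)\overline{u_2(1)})=-\tfrac12|u_1(1)-u_2(1)|^2\le 0$, which is just completing the square. Hence $\mathrm{Re}(\tilde L_0\mathbf{u}|\mathbf{u})\le\tfrac12\|\mathbf{u}\|^2$, as claimed.

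There is no serious obstacle here; the argument is a routine integration by parts. The only points that require care are (i) making sure the boundary term at $\rho=1$ comes out with the right sign — this is where the completing-the-square identity is essential, and it is the reason the bound is $\tfrac12\|\mathbf{u}\|^2$ rather than something sharper — and (ii) checking that the $\rho=0$ boundary contributions genuinely vanish, which uses the explicit $\rho$-weight for the first term and the regularity condition $u_1(0)=0$ for the cross-term. Since $\mathbf{u}\in C^1[0,1]^2$, all the integrations by parts are classically justified with no density or approximation argument needed.
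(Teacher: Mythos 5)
Your proof is correct and follows essentially the same route as the paper: a direct integration by parts, with the $\rho=0$ boundary contributions killed by the weight $\rho$ and the condition $u_1(0)=0$, and the $\rho=1$ boundary terms combining into $-\tfrac12|u_1(1)-u_2(1)|^2\le 0$. The only cosmetic difference is that you take real parts before integrating by parts, which avoids the purely imaginary cross term the paper carries along and then discards.
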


\begin{proof}
Integration by parts yields
\begin{multline*}
\mathrm{Re}(\tilde{L}_0 \mathbf{u} | \mathbf{u})=-\sum_{j=1}^2 \mathrm{Re} 
\int_0^1 \rho u'_j(\rho) \overline{u_j(\rho)}
d\rho + \mathrm{Re}\int_0^1 (u'_2(\rho) \overline{u_1(\rho)}
+u_1'(\rho) \overline{u_2(\rho)})d\rho \\
=\frac{1}{2}\|\mathbf{u}\|^2 - \frac{1}{2}|u_1(1)|^2-\frac{1}{2} |u_2(1)|^2
+\mathrm{Re}\overline{u_1(1)}u_2(1)\\
+\mathrm{Re} \left (2i\mathrm{Im}\int_0^1 u_1'(\rho)  \overline{u_2(\rho)}  
d\rho \right ) 
\leq \frac{1}{2} \|\mathbf{u}\|^2.
\end{multline*}
\end{proof}

\begin{lemma}
\label{lem_range}
The range of $1-\tilde{L}_0$ is dense in $\mathcal{H}$.
\end{lemma}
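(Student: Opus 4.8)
The plan is to establish the one hypothesis of the Lumer--Phillips theorem not already covered by Lemma~\ref{lem_qaccr}, namely denseness of $\mathrm{ran}(1-\tilde L_0)$, by solving the equation $(1-\tilde L_0)\mathbf u=\mathbf f$ with $\mathbf u\in\mathcal D(\tilde L_0)$ for a dense set of right-hand sides $\mathbf f$. Since $C_c^\infty(0,1)^2$ (smooth functions with compact support in the open interval) is dense in $\mathcal H=L^2(0,1)^2$, it suffices to treat $\mathbf f\in C_c^\infty(0,1)^2$.

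The decisive structural fact is that the coefficient matrix in Eq.~(\ref{eq_evol}) has the $\rho$-independent eigenvectors $(1,\pm1)^T$. Accordingly I would pass to the new unknowns $v:=u_1+u_2$ and $w:=u_1-u_2$, which turns $(1-\tilde L_0)\mathbf u=\mathbf f$ into the two \emph{decoupled} scalar first-order equations $(\rho-1)v'+v=f_1+f_2$ and $(\rho+1)w'+w=f_1-f_2$. Both are exact, the left-hand sides being $\frac{d}{d\rho}\big[(\rho-1)v\big]$ and $\frac{d}{d\rho}\big[(\rho+1)w\big]$, so each is solved by a single quadrature.

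For $w$ the coefficient $\rho+1\ge1$ never vanishes, so $w(\rho)=(\rho+1)^{-1}\big(c+\int_0^\rho(f_1-f_2)\big)$ is well defined and smooth for any constant $c$. For $v$ the coefficient $\rho-1$ degenerates at $\rho=1$, which is the boundary of the light cone; compatibility at that point forces the constant, leaving $v(\rho)=(\rho-1)^{-1}\int_1^\rho(f_1+f_2)$. Because $\mathbf f$ is compactly supported in $(0,1)$, the integral vanishes for $\rho$ near $1$, so $v\equiv0$ there and hence $v\in C^\infty[0,1]$ (for merely $C^\infty[0,1]$ data one would instead observe that $v$ is a local average of $f_1+f_2$ and Taylor-expand at $\rho=1$ to see it stays at least $C^1$). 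It remains to use the free constant $c$ to enforce the boundary condition $u_1(0)=\tfrac12\big(v(0)+w(0)\big)=0$, i.e.\ $c=-v(0)$. Then $\mathbf u=\tfrac12(v+w,\,v-w)^T\in C^1[0,1]^2$ with $u_1(0)=0$, so $\mathbf u\in\mathcal D(\tilde L_0)$ and $(1-\tilde L_0)\mathbf u=\mathbf f$; letting $\mathbf f$ range over $C_c^\infty(0,1)^2$ gives the lemma.

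The only delicate point is the regularity of $v$ across the light-cone boundary $\rho=1$, where the outgoing characteristic speed vanishes; restricting to compactly supported data sidesteps it entirely while still producing a dense subspace in the range. Together with Lemma~\ref{lem_qaccr} and the Lumer--Phillips theorem this yields that $\overline{\tilde L_0}$ generates a strongly continuous semigroup $S_0$ with $\|S_0(\tau)\|\le e^{\tau/2}$.
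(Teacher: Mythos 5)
Your proof is correct. The overall strategy is the same as the paper's -- exhibit an explicit preimage $\mathbf u\in\mathcal D(\tilde L_0)$ of each $\mathbf f$ in a dense class of smooth right-hand sides -- but the execution differs in two ways. First, the paper works with $\mathbf f\in C^\infty[0,1]^2$ and writes down a single closed-form expression $u_2(\rho)=\frac{1}{1-\rho^2}\int_\rho^1 F(\xi)\,d\xi$ with $F=f_1+\rho f_2+\int_0^\rho f_2$, then recovers $u_1$ from $u_2$; the factor $1-\rho^2=(1-\rho)(1+\rho)$ silently carries both of your characteristic weights at once, and the $C^1$ regularity at the degenerate endpoint $\rho=1$ is checked by de l'Hospital. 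You instead diagonalize the system into the characteristic variables $v=u_1+u_2$, $w=u_1-u_2$, which makes the structure (one equation degenerate at $\rho=1$, one uniformly nondegenerate, one free constant to absorb the boundary condition $u_1(0)=0$) completely transparent, and you shrink the dense class to $C_c^\infty(0,1)^2$ so that $v$ vanishes identically near $\rho=1$ and no endpoint regularity argument is needed. Your computation of the decoupled equations $\frac{d}{d\rho}[(\rho-1)v]=f_1+f_2$ and $\frac{d}{d\rho}[(\rho+1)w]=f_1-f_2$ is right, the compatibility condition killing the constant in the $v$-equation is correctly identified, and the choice $c=-v(0)$ does enforce $u_1(0)=0$. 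What your version buys is conceptual clarity about where the difficulty sits (the vanishing outgoing characteristic speed at the light-cone boundary) at the cost of a slightly smaller, but still dense, set of right-hand sides; the paper's version buys a formula valid up to the boundary for all smooth data, which is closer in spirit to the explicit resolvent formulas used later in the spectral analysis.
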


\begin{proof}
Let $\mathbf{f} \in C^\infty[0,1]^2$ and define $\mathbf{u}$ by
$$ u_2(\rho):=\frac{1}{1-\rho^2}\int_\rho^1 F(\xi)d\xi $$
and $u_1(\rho):=\rho u_2(\rho)-\int_0^\rho f_2(\xi)d\xi$ where 
$F(\rho):=f_1(\rho)+\rho f_2(\rho)+\int_0^\rho f_2(\xi)d\xi$.
An application of de l'Hospital's rule shows that $u_2 \in C^1[0,1]$ 
and obviously, $u_1 \in C^1[0,1]$, $u_1(0)=0$.
Hence, $\mathbf{u} \in \mathcal{D}(\tilde{L}_0)$ and a direct 
computation shows $(1-\tilde{L}_0)\mathbf{u}=\mathbf{f}$.
Thus, the claim follows from the density of $C^\infty[0,1]^2$
in $\mathcal{H}$.
\end{proof}

\begin{proposition}
\label{prop_L0gen}
The operator $\tilde{L}_0$ is closable and its closure $L_0$ generates a
strongly continuous semigroup $S_0: [0,\infty) \to \mathcal{B}(\mathcal{H})$
satisfying $\|S_0(\tau)\|\leq e^{\frac{1}{2}\tau}$.
\end{proposition}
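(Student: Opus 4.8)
The plan is to invoke the Lumer--Phillips theorem applied to the shifted operator $\tilde{L}_0-\tfrac12$. Lemma \ref{lem_qaccr} states precisely that $\mathrm{Re}\big((\tilde{L}_0-\tfrac12)\mathbf{u}\,\big|\,\mathbf{u}\big)\leq 0$ for all $\mathbf{u}\in\mathcal{D}(\tilde{L}_0)$, i.e.\ $\tilde{L}_0-\tfrac12$ is dissipative; since $\mathcal{D}(\tilde{L}_0)$ is dense in $\mathcal{H}$, the standard fact that a densely defined dissipative operator is closable with dissipative closure applies. Hence $\tilde{L}_0-\tfrac12$ — and therefore $\tilde{L}_0$ itself — is closable. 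Writing $L_0$ for the closure of $\tilde{L}_0$, the operator $L_0-\tfrac12$ is then densely defined, closed, and dissipative.

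Next I would check the range condition. For any dissipative operator $A$ one has the a priori bound $\|(\lambda-A)\mathbf{u}\|\geq\lambda\|\mathbf{u}\|$ for $\lambda>0$, so if $A$ is in addition closed, then $\lambda-A$ is injective with closed range. Applying this with $A=L_0-\tfrac12$ and $\lambda=\tfrac12$, the range of $\tfrac12-A=1-L_0$ is closed in $\mathcal{H}$. Since $L_0$ extends $\tilde{L}_0$, this range contains $\mathrm{ran}(1-\tilde{L}_0)$, which is dense in $\mathcal{H}$ by Lemma \ref{lem_range}. A closed subspace containing a dense set is the whole space, so $\mathrm{ran}(1-L_0)=\mathcal{H}$, i.e.\ $\mathrm{ran}\big(\tfrac12-(L_0-\tfrac12)\big)=\mathcal{H}$.

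With $L_0-\tfrac12$ densely defined, closed, dissipative, and $\tfrac12$ in the resolvent set (in fact surjectivity of $\tfrac12-(L_0-\tfrac12)$ together with the dissipative bound gives bijectivity), the Lumer--Phillips theorem yields that $L_0-\tfrac12$ generates a strongly continuous contraction semigroup $(T(\tau))_{\tau\geq0}$ on $\mathcal{H}$. Setting $S_0(\tau):=e^{\frac12\tau}T(\tau)$ then defines a strongly continuous semigroup whose generator is $(L_0-\tfrac12)+\tfrac12=L_0$, and $\|S_0(\tau)\|\leq e^{\frac12\tau}\|T(\tau)\|\leq e^{\frac12\tau}$, which is the claimed bound.

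Essentially all of this is routine semigroup bookkeeping; the two content-carrying inputs, Lemmas \ref{lem_qaccr} and \ref{lem_range}, are already in hand. The one point that needs a little care — and which I expect to be the main (minor) obstacle — is the passage from the density of $\mathrm{ran}(1-\tilde{L}_0)$ to the full surjectivity of $1-L_0$: one must combine the dissipativity with the closedness of $L_0$ so that the a priori estimate $\|(1-L_0)\mathbf{u}\|\geq\|\mathbf{u}\|$ forces the range to be closed, thereby upgrading "dense range" to "range equals $\mathcal{H}$" as required by Lumer--Phillips.
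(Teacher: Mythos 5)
Your argument is correct and is essentially identical to the paper's: the paper's proof consists of citing the Lumer--Phillips theorem with Lemmas \ref{lem_qaccr} and \ref{lem_range} as the two inputs, and your write-up simply unfolds the standard proof of the quasi-dissipative version of that theorem (closability of a densely defined dissipative operator, closed range from the dissipativity estimate upgrading dense range to surjectivity, then rescaling by $e^{\frac12\tau}$). The only nitpick is that with $A=L_0-\tfrac12$ and $\lambda=\tfrac12$ the a priori bound reads $\|(1-L_0)\mathbf{u}\|\geq\tfrac12\|\mathbf{u}\|$ rather than $\geq\|\mathbf{u}\|$, which changes nothing in the closed-range argument.
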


\begin{proof}
The claim follows from Lemmas \ref{lem_qaccr}, \ref{lem_range} and the
Lumer--Phillips Theorem (see e.g. Ref. \onlinecite{Beyer2007}, p. 56, Theorem 4.2.6).
\end{proof}

Hence, this semigroup description yields the appropriate functional analytic
setting to study the wave equation in similarity coordinates. 
According to the discussion in Sec. \ref{sec_sim}, the growth bound
$\|S_0(\tau)\|\leq e^{\frac{1}{2}\tau}$ implies the well--known stability of the
zero solution in energy space.
Furthermore, we conclude that the zero solution is asymptotically stable
with respect
to the $L^\infty$ norm.

\subsection{Spectral Analysis of the Generator}
\label{sec_freespec}
We study the spectrum of $L_0$. 
To this end we need a more explicit description of $\mathcal{D}(L_0)$.
We define the formal matrix differential expression $\ell_0 \mathbf{u}:=(-\rho
u_1'+u_2', u_1'-\rho u_2')^T$, i.e. $\ell_0$ generates $L_0$.

\begin{lemma}
The domain of $L_0$ is given by
$$ \mathcal{D}(L_0)=\{\mathbf{u} \in \mathcal{H}: \mathbf{u} \in
H^1_\mathrm{loc}(0,1)^2, \ell_0 \mathbf{u} \in \mathcal{H}, u_1(0)=0\}. $$
\end{lemma}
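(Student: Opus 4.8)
The plan is to prove the two inclusions separately. For the easier inclusion $\mathcal{D}(L_0)\supseteq$ (the right-hand set), suppose $\mathbf{u}\in\mathcal{H}$ with $\mathbf{u}\in H^1_\mathrm{loc}(0,1)^2$, $\ell_0\mathbf{u}\in\mathcal{H}$ and $u_1(0)=0$ (the boundary value making sense since $H^1_\mathrm{loc}$ near $0$ embeds into $C[0,\varepsilon]$ once we know $\ell_0\mathbf{u}\in\mathcal{H}$ — I will need to check this, see below). I would construct a sequence $\mathbf{u}_n\in\mathcal{D}(\tilde L_0)$ with $\mathbf{u}_n\to\mathbf{u}$ and $\tilde L_0\mathbf{u}_n\to\ell_0\mathbf{u}$ in $\mathcal{H}$; since $L_0$ is the closure of $\tilde L_0$, this gives $\mathbf{u}\in\mathcal{D}(L_0)$ and $L_0\mathbf{u}=\ell_0\mathbf{u}$. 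The sequence is obtained by mollification on the interior together with a careful cut-off near the endpoints $\rho=0$ and $\rho=1$, arranged so as not to destroy the boundary condition $u_1(0)=0$; near $\rho=0$ this is automatic because $u_1$ is continuous and vanishes there, while near $\rho=1$ there is no constraint.

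For the reverse inclusion $\mathcal{D}(L_0)\subseteq$ (the right-hand set), let $\mathbf{u}\in\mathcal{D}(L_0)$. By definition of the closure there are $\mathbf{u}_n\in\mathcal{D}(\tilde L_0)$ with $\mathbf{u}_n\to\mathbf{u}$ and $\tilde L_0\mathbf{u}_n\to\mathbf{v}$ in $\mathcal{H}$ for some $\mathbf{v}\in\mathcal{H}$, and $L_0\mathbf{u}=\mathbf{v}$. I would first show $\mathbf{v}=\ell_0\mathbf{u}$ in the distributional sense: for any test vector $\boldsymbol{\varphi}\in C_c^\infty(0,1)^2$, integration against $\tilde L_0\mathbf{u}_n$ and passing to the limit identifies $\ell_0\mathbf{u}$ (interpreted distributionally) with $\mathbf{v}\in L^2$. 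The system $\ell_0\mathbf{u}=\mathbf{v}$ is, away from $\rho=1$, an elliptic-in-$\rho$ (actually first-order, non-degenerate) $2\times2$ system: the coefficient matrix $\begin{pmatrix}-\rho&1\\1&-\rho\end{pmatrix}$ has determinant $\rho^2-1\neq0$ on $[0,1)$, so on any compact subinterval of $[0,1)$ one can solve for $\mathbf{u}'$ in terms of $\mathbf{u}$ and $\mathbf{v}$, yielding $\mathbf{u}'\in L^2_\mathrm{loc}[0,1)$, hence $\mathbf{u}\in H^1_\mathrm{loc}[0,1)^2$; near $\rho=1$ one argues similarly on $(\varepsilon,1)$ using that the matrix is invertible there too (determinant $\rho^2-1\neq0$ for $\rho<1$), giving $\mathbf{u}\in H^1_\mathrm{loc}(0,1)^2$. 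Then $\mathbf{u}\in H^1_\mathrm{loc}$ together with $\mathbf{u}\in L^2$ near $0$ and $\ell_0\mathbf{u}\in L^2$ near $0$ lets me extract a trace $u_1(0)$; the condition $u_1(0)=0$ passes to the limit from $u_{1,n}(0)=0$ via the local $H^1\hookrightarrow C$ bound near $\rho=0$, which I get from the equation as above.

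The main obstacle is the degeneracy of the principal symbol at $\rho=1$: there the coefficient matrix $\begin{pmatrix}-\rho&1\\1&-\rho\end{pmatrix}$ becomes $\begin{pmatrix}-1&1\\1&-1\end{pmatrix}$, which is singular, so one cannot straightforwardly solve for $\mathbf{u}'$ on a neighborhood of $1$ and thereby cannot expect $H^1$ regularity up to $\rho=1$ — indeed the description only claims $H^1_\mathrm{loc}(0,1)$, not $H^1(0,1)$, precisely for this reason. I expect the resolution to be exactly that: one only needs regularity on compact subsets of the \emph{open} interval, and the degenerate endpoint $\rho=1$ carries no boundary condition (consistent with it being the light-cone boundary), so no trace need be controlled there. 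The delicate point is making sure that in the approximation argument for the first inclusion the cut-off near $\rho=1$ is compatible with $\tilde L_0\mathbf{u}_n\to\ell_0\mathbf{u}$ in $L^2$ despite this degeneracy; here one exploits that $\ell_0\mathbf{u}\in L^2(0,1)$ globally (given), so the mass of $\mathbf{u}'$ near $\rho=1$, while possibly unbounded, is still $L^2$-integrable after multiplication by the vanishing factor, and a dominated-convergence argument closes the estimate. The endpoint $\rho=0$ is comparatively harmless since there the matrix is $\begin{pmatrix}0&1\\1&0\end{pmatrix}$, invertible, and the only subtlety is preserving $u_1(0)=0$, handled by choosing the mollification so that the first component stays zero at the origin.
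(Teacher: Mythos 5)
Your treatment of the inclusion $\mathcal{D}(L_0)\subseteq\{\dots\}$ matches the paper's: invert the coefficient matrix on compact subsets of $[0,1)$ (determinant $\rho^2-1\neq0$) to get $H^1_{\mathrm{loc}}$ and the trace at $0$. The problem is the other inclusion, which you call ``easier'' but which is in fact where the work lies, and your sketch of the approximation there contains a step that fails as described. Write $A(\rho)=\left(\begin{smallmatrix}-\rho&1\\1&-\rho\end{smallmatrix}\right)$, so $\ell_0\mathbf{u}=A\mathbf{u}'$. If $\chi_n$ is a cut-off equal to $1$ on $[0,1-1/n]$ with $|\chi_n'|\leq Cn$, then $\ell_0(\chi_n\mathbf{u})=\chi_n\,\ell_0\mathbf{u}+\chi_n' A\mathbf{u}$, and the commutator term $\chi_n'A\mathbf{u}$ does \emph{not} tend to $0$ in $L^2$ in general: $A(1)$ is singular but not zero, so $\|\chi_n'A\mathbf{u}\|_{L^2}\sim n\|\mathbf{u}\|_{L^2(1-1/n,1)}$, which diverges whenever $\mathbf{u}$ has a nonzero limit at $\rho=1$. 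Your dominated-convergence remark addresses the term $\chi_n(1-\rho)\mathbf{u}'$ but not this commutator, which is the actual obstruction. The fix is to diagonalize first: in the characteristic variables $w_\pm=u_1\pm u_2$ one has $(\ell_0\mathbf{u})_1+(\ell_0\mathbf{u})_2=(1-\rho)w_+'$ and $(\ell_0\mathbf{u})_1-(\ell_0\mathbf{u})_2=-(1+\rho)w_-'$, so $w_-\in H^1(0,1)$ (approximate it in the full $H^1$ norm, no cut-off) while only $w_+$ needs the weighted approximation, and there the cut-off commutator carries the vanishing factor $(1-\rho)\chi_n'$, which is bounded, so the error is controlled by $\|w_+\|_{L^2(1-1/n,1)}\to0$. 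Without this decomposition your scheme does not close.

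It is worth noting that the paper avoids the approximation argument altogether for this inclusion. Since Lumer--Phillips has already given $1\in\rho(L_0)$, one sets $\mathbf{v}:=R_{L_0}(1)(1-\ell_0)\mathbf{u}\in\mathcal{D}(L_0)$; by the already-proved inclusion, $\mathbf{v}$ satisfies $(1-\ell_0)\mathbf{v}=(1-\ell_0)\mathbf{u}$ and $v_1(0)=0$, and uniqueness for the first-order system $(1-\ell_0)(\mathbf{u}-\mathbf{v})=0$ with vanishing first component at $\rho=0$ forces $\mathbf{u}=\mathbf{v}\in\mathcal{D}(L_0)$. This resolvent trick replaces the delicate density argument near the degenerate endpoint by an ODE uniqueness statement; your route is viable once repaired as above, but it is genuinely more laborious.
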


\begin{proof}
Suppose $\mathbf{u} \in \mathcal{D}(L_0)$. Then there exists a sequence
$(\mathbf{u}_j) \subset \mathcal{D}(\tilde{L}_0)$ such that $\mathbf{u}_j \to
\mathbf{u}$ and $\tilde{L}_0 \mathbf{u}_j \to L_0\mathbf{u}$ in $\mathcal{H}$.
This implies that $((1-\rho^2)u_{1j}')$ and $((1-\rho^2)u_{2j}')$ are Cauchy
sequences in $L^2(0,1)$.
Thus, $u_1, u_2 \in H^1(0,1-\delta)$ for any $\delta \in (0,1)$ and the Sobolev
embedding $H^1(0,1-\delta) \hookrightarrow C[0,1-\delta]$ yields $u_1(0)=0$.

Conversely, if $\mathbf{u} \in \mathcal{H} \cap H^1_\mathrm{loc}(0,1)^2$, $\ell_0
\mathbf{u} \in \mathcal{H}$ and $u_1(0)=0$ then 
$(1-\ell_0)\mathbf{u} \in \mathcal{H}$ and we define $\mathbf{v}:=R_{L_0}(1)(1-\ell_0)\mathbf{u}$.
Then $\mathbf{v} \in \mathcal{D}(L_0)$ and by definition we have
$(1-\ell_0)\mathbf{v}=(1-\ell_0)\mathbf{u}$. From the first part of the proof we conclude that
$v_1(0)=0$ and a short calculation shows that $(1-\ell_0)(\mathbf{u}-\mathbf{v})=0$ necessarily
implies $\mathbf{u}=\mathbf{v} \in \mathcal{D}(L_0)$. 
\end{proof} 

Next we show that the analysis of the spectrum of $L_0$ can be reduced 
to the study of the invertibility of an operator--valued function.
For $\lambda \in \mathbb{C}$ we define the formal differential expression
$t_0(\lambda)$ by
$$ t_0(\lambda)u(\rho):=-(1-\rho^2)u''(\rho)+2 \lambda \rho
u'(\rho)+\lambda(\lambda-1)u(\rho) $$
and set $\mathcal{D}(T_0(\lambda)):=\{u \in H^1(0,1): u \in H^2_\mathrm{loc}(0,1), t_0(\lambda)u\in L^2(0,1), u(0)=0\}$, $T_0(\lambda)u:=t_0(\lambda)u$.
The following result shows that $\sigma(L_0)$ is completely determined by $T_0(\lambda)$.

\begin{proposition}
\label{prop_spec}
The operator $\lambda-L_0$ for a $\lambda \in \mathbb{C}$ is bounded invertible if and only if $T_0(\lambda)$ is invertible. Furthermore, $\lambda \in \sigma_p(L_0)$ if and only if $\dim \ker T_0(\lambda)=1$.  In this case, $\mathbf{u}=(u_1,u_2)^T$ defined by $u_1(\rho):=\rho u'(\rho)+(\lambda-1)u(\rho)$ and $u_2:=u'$ for $u \in \ker T_0(\lambda)$, $u \not= 0$, is an eigenfunction of $L_0$.
\end{proposition}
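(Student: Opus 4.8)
\emph{Plan.} The idea is to pass between the first--order system $(\lambda-\ell_0)\mathbf{u}=\mathbf{f}$ and a scalar problem for $t_0(\lambda)$, keeping close track of the two domains. The correct scalar unknown is $u(\rho):=\int_0^\rho u_2(\xi)\,d\xi$ --- up to a factor $e^{(\lambda-1)\tau}$ this is the field reconstructed from $\phi_2$ in Eq.~\eqref{eq_phi} --- so that $u'=u_2$ and $u(0)=0$ automatically. Writing $(\lambda-\ell_0)\mathbf{u}=\mathbf{f}$ componentwise as $\lambda u_1+\rho u_1'-u_2'=f_1$ and $\lambda u_2-u_1'+\rho u_2'=f_2$, I would integrate the second identity over $(0,\rho)$, integrating $\int_0^\rho\xi u_2'$ by parts and using $u_1(0)=0$, to get $u_1=(\lambda-1)u+\rho u'-F_2$ with $F_2(\rho):=\int_0^\rho f_2(\xi)\,d\xi$. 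Substituting this into the first identity and simplifying produces precisely $t_0(\lambda)u=\mathcal{J}\mathbf{f}$, where $(\mathcal{J}\mathbf{f})(\rho):=f_1(\rho)+\rho f_2(\rho)+\lambda\int_0^\rho f_2(\xi)\,d\xi$ (at $\lambda=1$ this is exactly the source term $F$ from Lemma~\ref{lem_range}). Conversely, given $u$ with $t_0(\lambda)u=\mathcal{J}\mathbf{f}$, a direct computation shows $\mathbf{u}:=((\lambda-1)u+\rho u'-F_2,\,u')^T$ solves $(\lambda-\ell_0)\mathbf{u}=\mathbf{f}$; the two maps are mutually inverse because $\int_0^\rho u'=u$ when $u(0)=0$.

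\emph{Matching the domains.} I would then check that this correspondence identifies $\mathcal{D}(L_0)$ with $\mathcal{D}(T_0(\lambda))$, using the explicit description of $\mathcal{D}(L_0)$ established above. If $\mathbf{u}\in\mathcal{D}(L_0)$, then $u'=u_2\in L^2(0,1)$ and $u\in L^\infty(0,1)$ by Cauchy--Schwarz, so $u\in H^1(0,1)$; also $u\in H^2_{\mathrm{loc}}(0,1)$ since $u_2\in H^1_{\mathrm{loc}}(0,1)$, $u(0)=0$, and $t_0(\lambda)u=\mathcal{J}((\lambda-L_0)\mathbf{u})\in L^2(0,1)$ because $\mathcal{J}:\mathcal{H}\to L^2(0,1)$ is bounded; hence $u\in\mathcal{D}(T_0(\lambda))$. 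For the converse, any $u\in\mathcal{D}(T_0(\lambda))$ satisfies $u\in H^2(0,\tfrac12)$ --- solve $t_0(\lambda)u$ for $u''$ and use that $1-\rho^2$ is bounded below near $0$ --- hence $u\in C^1[0,\tfrac12]$ and $\rho u'(\rho)\to0$, so the reconstructed $\mathbf{u}$ has $u_1(0)=0$; the remaining membership conditions are immediate. The same estimate $(1-\rho^2)u_j'\in L^2$, which follows from $\ell_0\mathbf{u}\in\mathcal{H}$, justifies the integration by parts near $\rho=0$ above, while the endpoint $\rho=1$ enters only through $u\in H^1(0,1)$, i.e.\ $u_2\in L^2(0,1)$, which is part of both domains.

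\emph{Conclusions.} Since $\mathcal{J}(g,0)^T=g$, the operator $\mathcal{J}$ is onto $L^2(0,1)$, so $\lambda-L_0$ is surjective iff $T_0(\lambda)$ is; and with $\mathbf{f}=0$ the correspondence restricts to a linear isomorphism $\ker T_0(\lambda)\to\ker(\lambda-L_0)$, $u\mapsto(\rho u'+(\lambda-1)u,\,u')^T$, so injectivity transfers both ways. Because $L_0$ is closed, bijectivity of $\lambda-L_0$ makes $R_{L_0}(\lambda)$ bounded; conversely $T_0(\lambda)^{-1}$ is bounded, being the composition of the bounded maps $g\mapsto(g,0)^T$, $R_{L_0}(\lambda)$, projection onto the second component, and $h\mapsto\int_0^\rho h$. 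This proves the first assertion, and the isomorphism above gives $\lambda\in\sigma_p(L_0)\iff\ker T_0(\lambda)\neq\{0\}$ together with the stated eigenfunction. Finally $\dim\ker T_0(\lambda)\le1$ for every $\lambda$: each element of $\ker T_0(\lambda)$ is a solution of the linear second--order ODE $t_0(\lambda)u=0$ on $(0,1)$ --- smooth there, since the coefficients are --- subject to $u(0)=0$ at the ordinary point $\rho=0$, hence determined by $u'(0)$ alone. Combining, $\lambda\in\sigma_p(L_0)\iff\dim\ker T_0(\lambda)=1$.

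\emph{Main difficulty.} Everything here is formal once the first step is in place, so the only real content is that reduction --- choosing the scalar variable $u$ and, in particular, computing the source term $\mathcal{J}\mathbf{f}$ exactly --- together with the endpoint regularity bookkeeping that makes the identification of $\mathcal{D}(L_0)$ with $\mathcal{D}(T_0(\lambda))$ precise.
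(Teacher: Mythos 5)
Your proposal is correct and follows essentially the same route as the paper: the reduction to the scalar unknown $u(\rho)=\int_0^\rho u_2(\xi)\,d\xi$ with source term $f_1+\rho f_2+\lambda\int_0^\rho f_2$ (the paper's $B(\lambda)\mathbf{f}$), the closed graph theorem for boundedness, and ODE uniqueness for $\dim\ker T_0(\lambda)\le 1$ are all exactly the paper's steps. The only difference is organizational --- you treat the inhomogeneous system uniformly and spell out the domain identification more explicitly, whereas the paper splits the kernel and surjectivity arguments and uses $\mathbf{f}=(f,0)^T$ for the latter.
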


\begin{proof}
Suppose $\lambda \in \sigma_p(L_0)$ and $\mathbf{u}$ is the associated eigenfunction.
Then, $u_1'=\lambda u_2+\rho u_2'$ and this implies $u_1(\rho)=\rho u_2(\rho)+(\lambda-1)\int_0^\rho u_2(\xi)d\xi$ since $u_1(0)=0$. 
Inserting in $\lambda u_1+\rho u_1'-u_2'=0$ yields
$$ -(1-\rho^2)u_2'(\rho)+2\lambda \rho u_2(\rho)+\lambda (\lambda-1) \int_0^\rho u_2(\xi)d\xi=0. $$
Set $u(\rho):=\int_0^\rho u_2(\xi)d\xi$. Then, $u \in H^1(0,1)\cap H^2_\mathrm{loc}(0,1)$, $u(0)=0$ and $t_0(\lambda)u=0$. Thus, $u \in \ker T_0(\lambda)$.

Conversely, let $u \in \ker T_0(\lambda)$, $u \not=0$, and define $u_1(\rho):=\rho u'(\rho)+(\lambda-1)u(\rho)$, $u_2:=u'$.
Then, $\mathbf{u}=(u_1,u_2)^T \in \mathcal{H} \cap H^1_\mathrm{loc}(0,1)^2$, 
$u_1(0)=0$ and $\ell_0 \mathbf{u}=\lambda \mathbf{u}$. 
This shows $\mathbf{u} \in \ker (\lambda-L_0)$ and thus, 
$\lambda \in \sigma_p(L_0)$.

Suppose $\lambda-L_0$ is surjective and set $\mathbf{f}:=(f,0)^T \in \mathcal{H}$. Then, there exists a $\mathbf{u} \in \mathcal{D}(L_0)$ such that $(\lambda-L_0)\mathbf{u}=\mathbf{f}$.
As before, this implies that $u(\rho):=\int_0^\rho u_2(\xi)d\xi$ belongs to $\mathcal{D}(T_0(\lambda))$ and satisfies $T_0(\lambda)u=f$. Thus, $T_0(\lambda)$ is surjective.

Conversely, if $T_0(\lambda)$ is surjective, we can find a 
$u \in \mathcal{D}(T_0(\lambda))$ satisfying 
$T_0(\lambda)u(\rho)=f_1(\rho)+\rho f_2(\rho)+\lambda \int_0^\rho f_2(\xi)d\xi$ 
for any $\mathbf{f}=(f_1,f_2)^T \in \mathcal{H}$.
Defining $\mathbf{u}$ by $u_1(\rho):=\rho
u'(\rho)+(\lambda-1)u(\rho)-\int_0^\rho f_2(\xi)d\xi$ and $u_2:=u'$ we observe
that $\mathbf{u} \in \mathcal{D}(L_0)$ and $(\lambda-L_0)\mathbf{u}=\mathbf{f}$
which shows surjectivity of $\lambda-L_0$.

Thus, we have shown that $\lambda-L_0$ is bijective if and only if $T_0(\lambda)$ is bijective. The closed graph theorem implies that $(\lambda-L_0)^{-1}$ is bounded if it exists. Finally, by basic ODE theory we conclude that $\dim \ker T_0(\lambda)$ is at most one--dimensional.
\end{proof}

With Proposition \ref{prop_spec} at hand we can easily calculate the spectrum of $L_0$ by solving a second order ODE.
The complete characterization of $\sigma(L_0)$ is given in the following Lemma.

\begin{lemma}
The spectrum of $L_0$ is given by $\sigma(L_0)=
\{\lambda \in \mathbb{C}: \mathrm{Re}\lambda \leq \frac{1}{2}\}$ where
$\sigma_p(L_0)=\{\lambda \in \mathbb{C}: \mathrm{Re}\lambda < \frac{1}{2}\}$, $\sigma_c(L_0)=\{\lambda \in \mathbb{C}: \mathrm{Re}\lambda=\frac{1}{2}\}$, $\sigma_r(L_0)=\emptyset$.
\end{lemma}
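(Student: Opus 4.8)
The plan is to use Proposition~\ref{prop_spec} to transfer the whole question to the scalar differential expression $t_0(\lambda)$ and then to analyse the resulting second order ODE. The equation $t_0(\lambda)u=0$ has an ordinary point at $\rho=0$ and regular singular points at $\rho=\pm1$; a Frobenius analysis at $\rho=1$ produces the indicial roots $0$ and $1-\lambda$, so that near $\rho=1$ the solution space is spanned by a solution $w_\lambda$ analytic at $\rho=1$ and one behaving like $(1-\rho)^{1-\lambda}$ (a logarithm occurs only when $1-\lambda\in\mathbb{Z}$). Since $\rho=0$ is ordinary, there is, up to scalars, a unique solution $u_\lambda$ with $u_\lambda(0)=0$; after the substitution $x=\rho^2$ one recognises a hypergeometric equation and $u_\lambda(\rho)=\rho\,{}_2F_1\bigl(\tfrac{\lambda+1}{2},\tfrac{\lambda}{2};\tfrac{3}{2};\rho^2\bigr)$, although only the local behaviour of $u_\lambda$ near $\rho=1$ will be needed.

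For $\mathrm{Re}\lambda<\tfrac{1}{2}$ I would check that $u_\lambda\in H^1(0,1)$: the analytic part is harmless, and the derivative of the $(1-\rho)^{1-\lambda}$ part behaves like $(1-\rho)^{-\lambda}$, which lies in $L^2$ near $\rho=1$ precisely when $2\,\mathrm{Re}\lambda<1$. Hence $u_\lambda$ is a nonzero element of $\ker T_0(\lambda)$, so $\dim\ker T_0(\lambda)=1$ by basic ODE theory, and $\lambda\in\sigma_p(L_0)$ by Proposition~\ref{prop_spec}. For $\mathrm{Re}\lambda>\tfrac{1}{2}$, Proposition~\ref{prop_L0gen} together with the standard resolvent bound $\|R_{L_0}(\lambda)\|\le(\mathrm{Re}\lambda-\tfrac{1}{2})^{-1}$ for semigroup generators gives $\lambda\in\rho(L_0)$. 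Since $\sigma(L_0)$ is closed and contains the open half plane $\{\mathrm{Re}\lambda<\tfrac{1}{2}\}$, it contains its closure, so $\sigma(L_0)=\{\mathrm{Re}\lambda\le\tfrac{1}{2}\}$ and only the critical line $\mathrm{Re}\lambda=\tfrac{1}{2}$ remains to be classified.

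On the critical line $1-\lambda\notin\mathbb{Z}$, so the Frobenius picture is clean, and I claim $\lambda\notin\sigma_p(L_0)$. Any $u\in\ker T_0(\lambda)$ lies in $H^1(0,1)$, which forces the $(1-\rho)^{1-\lambda}$ component of $u$ to vanish (otherwise $u'\sim(1-\rho)^{-\lambda}\notin L^2$ for $\mathrm{Re}\lambda=\tfrac{1}{2}$), so $u$ is a multiple of $w_\lambda$; then the associated $\mathbf{u}=(u_1,u_2)^T$ from Proposition~\ref{prop_spec} is $C^1$ up to $\rho=1$, hence $\mathbf{u}\in\mathcal{D}(\tilde L_0)$ and $\tilde L_0\mathbf{u}=\lambda\mathbf{u}$. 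Inserting this into the boundary term identity established in the proof of Lemma~\ref{lem_qaccr} and using $\mathrm{Re}\lambda=\tfrac{1}{2}$ forces $u_1(1)=u_2(1)$, i.e.\ $(\lambda-1)u(1)=0$; combined with the relation $2\lambda u'(1)+\lambda(\lambda-1)u(1)=0$ obtained by evaluating $t_0(\lambda)u=0$ at $\rho=1$, this yields $u(1)=u'(1)=0$ and hence $u\equiv0$, a contradiction. Thus $T_0(\lambda)$ is injective and $\lambda\notin\sigma_p(L_0)$.

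It remains to show $\sigma_r(L_0)=\emptyset$, which I would obtain by proving $\sigma_p(L_0^*)=\emptyset$, since $\lambda\in\sigma_r(L_0)$ would force $\bar\lambda\in\sigma_p(L_0^*)$. A short integration by parts identifies the formal adjoint $\ell_0^*\mathbf{v}=\mathbf{v}-M\partial_\rho\mathbf{v}$, where $M$ denotes the coefficient matrix of Eq.~(\ref{eq_evol}), and shows that every $\mathbf{v}\in\mathcal{D}(L_0^*)$ satisfies $v_1(0)=0$. In the variables $w=v_1+v_2$ and $z=v_1-v_2$ the eigenvalue equation $\ell_0^*\mathbf{v}=\mu\mathbf{v}$ decouples into $(1-\rho)w'=(1-\mu)w$ and $(1+\rho)z'=(\mu-1)z$, so $w=c(1-\rho)^{\mu-1}$ and $z=c'(1+\rho)^{\mu-1}$; for $\mathrm{Re}\mu\le\tfrac{1}{2}$, square integrability of $w$ on $(0,1)$ forces $c=0$, and then $v_1(0)=0$ forces $c'=0$, while for $\mathrm{Re}\mu>\tfrac{1}{2}$ one has $\mu\in\rho(L_0^*)$ because $\|S_0^*(\tau)\|=\|S_0(\tau)\|\le e^{\tau/2}$. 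Hence $\sigma_p(L_0^*)=\emptyset$, so $\sigma_r(L_0)=\emptyset$, and the critical line belongs to $\sigma_c(L_0)$. The main obstacle is precisely this critical line: excluding eigenvalues there, which separates $\sigma_c$ from $\sigma_p$, and, more delicately, obtaining density of the range via the adjoint; everything off the line is soft. As an alternative to the use of Lemma~\ref{lem_qaccr}, one could settle the eigenvalue question on the line purely by ODE theory, computing the connection coefficient of $u_\lambda$ at $\rho=1$ from the hypergeometric connection formula and checking that it does not vanish.
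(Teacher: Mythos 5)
Your proof is correct, and while its skeleton coincides with the paper's (reduction to $T_0(\lambda)$ via Proposition \ref{prop_spec}, the resolvent bound from the semigroup estimate of Proposition \ref{prop_L0gen} for $\mathrm{Re}\lambda>\frac{1}{2}$, eigenvalues filling the open left half plane, closedness of the spectrum), your treatment of the critical line $\mathrm{Re}\lambda=\frac{1}{2}$ takes a genuinely different route. The paper simply solves $t_0(\lambda)u=0$ in closed form: $u_0(\rho,\lambda)=(1-\rho)^{1-\lambda}-(1+\rho)^{1-\lambda}$ is the solution vanishing at $0$, and since the coefficient of the singular branch is visibly $1$, one reads off that $u_0(\cdot,\lambda)\in\mathcal{D}(T_0(\lambda))$ iff $\mathrm{Re}\lambda<\frac{1}{2}$; the remaining classification is then done softly by citing that the topological boundary of the spectrum lies in the approximate point spectrum, which forces $\sigma_c(L_0)=\frac{1}{2}+i\mathbb{R}$ and $\sigma_r(L_0)=\emptyset$. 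You instead (a) exclude eigenvalues on the line by combining the $H^1$ obstruction for the $(1-\rho)^{1-\lambda}$ Frobenius branch with the boundary-term identity from the proof of Lemma \ref{lem_qaccr} applied to the remaining analytic candidate, and (b) exclude residual spectrum by computing $L_0^*$ explicitly and showing $\sigma_p(L_0^*)=\emptyset$. Both devices work and are more self-contained than the appeal to the approximate point spectrum; the adjoint computation in particular is completely elementary since the adjoint eigenvalue equation decouples in the characteristic variables $v_1\pm v_2$. Two small debts remain in your version: in (a), deducing $u\equiv 0$ from $u(1)=u'(1)=0$ at the regular singular point is cleanest if you note that $u$ is a multiple of the exponent-$0$ Frobenius solution, whose value at $\rho=1$ is its nonzero leading coefficient, so $u(1)=0$ alone already forces $u=0$; in (b), the assertion that every $\mathbf{v}\in\mathcal{D}(L_0^*)$ is $H^1_{\mathrm{loc}}$, satisfies the formal adjoint equation and admits the trace $v_1(0)=0$ needs the same kind of regularity argument as the paper's characterization of $\mathcal{D}(L_0)$ (and in fact the boundary terms also impose $v_1(1)=v_2(1)$, which you may discard since fewer constraints only enlarge the candidate kernel). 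Neither point is a gap in substance, and your closing remark about the hypergeometric connection coefficient is precisely the mechanism that makes the paper's explicit solution decisive: for the free equation that coefficient never vanishes.
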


\begin{proof}
First of all we note that the estimate $\|S_0(\tau)\| \leq e^{\frac{1}{2}\tau}$
for the semigroup $S_0$ generated by $L_0$ implies
that $\lambda \in \rho(L_0)$ if $\mathrm{Re}\lambda>\frac{1}{2}$ (see e.g.
Ref. \onlinecite{Engel2000}, p. 55, Theorem 1.10 (ii)) and hence, we restrict ourselves to $\mathrm{Re}\lambda \leq \frac{1}{2}$.

The equation $t_0(\lambda)u=0$ can be solved explicitly and the solution 
$u_0(\cdot, \lambda)$, which satisfies $u_0(0,\lambda)=0$, is given by
$u_0(\rho,\lambda)=(1-\rho)^{1-\lambda}-(1+\rho)^{1-\lambda}$.
Thus, having Proposition \ref{prop_spec} in mind we can immediately read off the 
point spectrum of $L_0$. 
We observe that
$u_0(\cdot, \lambda) \in \mathcal{D}(T_0(\lambda))$ 
if and only if $\mathrm{Re}\lambda < \frac{1}{2}$ which shows that any 
$\lambda$ with real part smaller than $\frac{1}{2}$ is an eigenvalue of $L_0$. 
Since the spectrum is always closed, we conclude that $\sigma(L_0)=\{\lambda \in
\mathbb{C}: \mathrm{Re}\lambda \leq \frac{1}{2}\}$.
Furthermore, by Ref. \onlinecite{Engel2000}, p. 242, Proposition 1.10 we know that the 
topological boundary $\frac{1}{2}+i\mathbb{R}$ of $\sigma(L_0)$ is contained in 
the approximate point spectrum which is given by 
$\sigma_p(L_0) \cup \sigma_c(L_0)$ and we conclude that 
$\sigma_c(L_0)=\frac{1}{2}+i\mathbb{R}$.
\end{proof}

We have the interesting situation that $L_0$ possesses a continuum of eigenvalues.
We remark that there exists a special subset $\{0,-1,-2,\dots\}$
of eigenvalues with analytic 
eigenfunctions.
However, from the point of view of semigroup theory there is no reason to
consider those analytic eigenfunctions as distinguished.
As a consequence we see that for any 
$\varepsilon>0$ there exist solutions of eq.
(\ref{eq_evolop}) that grow like
$\exp \left ((\frac{1}{2}-\varepsilon)\tau \right)$ for $\tau \to \infty$ which 
shows that the growth bound $\|S_0(\tau)\|\leq e^{\frac{1}{2}\tau}$ is sharp.
This has implications for the original system in $(t,r)$ coordinates as well.
Suppose one prescribes initial data determined by a nonanalytic mode solution.
Although these modes are only defined for $r \leq T-t$,
one can extend them in a $H^1$--fashion to yield perfectly admissible
finite energy initial data.
By causality the time development inside the past lightcone of $(t,r)=(T,0)$ is
independent of how the initial data have been extended and
exactly determined by the behaviour of the mode solution.
Hence, these modes are clearly seen to participate in generic time evolutions.

\subsection{An Estimate for $T_0^{-1}(\lambda)$}
The operator $L_0$ generates the strongly continuous semigroup $S_0$ which
satisfies $\|S_0(\tau)\|\leq e^{\frac{1}{2}\tau}$ and this implies the resolvent estimate
$\|R_{L_0}(\lambda)\|\leq \frac{1}{\mathrm{Re}\lambda-\frac{1}{2}}$ for all
$\lambda \in \mathbb{C}$ with $\mathrm{Re}\lambda>\frac{1}{2}$ 
(cf. e.g. Ref. \onlinecite{Engel2000}, p. 55, Theorem 1.10 (iii)).
The resolvent $R_{L_0}(\lambda)$ can be given in terms of $T_0^{-1}(\lambda)$.
To this end we define the operator $B(\lambda): \mathcal{H} \to L^2(0,1)$ by
$B(\lambda)\mathbf{f}(\rho):=f_1(\rho)+\rho f_2(\rho)+\lambda \int_0^\rho
f_2(\xi)d\xi$.
A straightforward calculation shows that
$$ R_{L_0}(\lambda)\mathbf{f}(\rho)=\left ( \begin{array}{c} \rho 
(T_0^{-1}(\lambda)B(\lambda)\mathbf{f})'(\rho)+(\lambda
-1)T_0^{-1}(\lambda)B(\lambda)\mathbf{f}(\rho)-\int_0^\rho f_2(\xi)d\xi \\
(T_0^{-1}(\lambda)B(\lambda)\mathbf{f})'(\rho)
\end{array} \right ). $$
The resolvent estimate implies 
$\|(T_0^{-1}(\lambda)B(\lambda)\mathbf{f})'\|_{L^2(0,1)}\leq
\frac{1}{\mathrm{Re}\lambda-\frac{1}{2}}$ and the first component of 
$R_{L_0}(\lambda)\mathbf{f}$
then yields 
\begin{equation}
\label{eq_estT0}
\|T_0^{-1}(\lambda)B(\lambda)\mathbf{f}\|_{L^2(0,1)} \leq \frac{1}{|\lambda-1|}\left
(\frac{2}{\mathrm{Re}\lambda-\frac{1}{2}}+1 \right )\|\mathbf{f}\|
\end{equation} 
for all $\lambda$ with $\mathrm{Re}\lambda>\frac{1}{2}$, $\lambda \not= 1$.
This estimate will be extremely useful later on.

\section{The Wave Equation With a Power Nonlinearity}
\subsection{Definition of the System, Known Results}
Consider the semilinear focusing wave equation
\begin{equation}
\label{eq_semlinwave}
\chi_{tt}-\Delta \chi-\chi^p=0
\end{equation}
for $\chi: \mathbb{R}\times \mathbb{R}^3 \to \mathbb{R}$ where
$p>1$ is an odd
integer.
This equation has been subject to many studies on nonlinear wave equations 
due to its simplicity. 
For the local well--posedness we refer to Lindblad and Sogge Ref. \onlinecite{Lindblad1}
and references therein.

Neglecting the Laplacian immediately leads to the homogeneous--in--space solution
$$ \chi_0(t,x)=c_0^{1/(p-1)}(T-t)^{-2/(p-1)}, \:\: c_0=\frac{2(p+1)}{(p-1)^2} $$
which, together with smooth cut--off and finite speed of propagation, provides 
an example of a solution which starts from smooth compactly supported initial 
data and blows up for $t \to T-$. 
Based on numerics, it is conjectured that this solution describes the generic 
blow up scenario (cf. Ref. \onlinecite{Bizon1}).
For the system in one space dimension, this conjecture has been proved by 
Merle and Zaag Ref. \onlinecite{Merle3}.
Furthermore, the blow up behaviour
for certain ranges of $p$ and different space dimensions has been rigorously
analysed in Ref. \onlinecite{Merle1}, Ref. \onlinecite{Merle2},
Ref. \onlinecite{Merle4}.
In all of these studies the coordinate system $(\tau,\rho)$ plays an important role.
We also mention the work of Galaktionov and Pohozaev Ref. \onlinecite{Galaktionov1} where a
perturbative analysis in similarity coordinates is presented.

To analyse the stability of $\chi_0$
we linearize the problem by 
inserting the ansatz $\chi=\chi_0+\tilde{\psi}$ into Eq. (\ref{eq_semlinwave})
and neglecting higher order terms which leads to
\begin{equation}
\tilde{\psi}_{tt}(t,x)-\Delta \tilde{\psi}(t,x) -p c_0 (T-t)^{-2} \tilde{\psi}(t,x)=0.
\end{equation}
We restrict ourselves to spherically symmetric perturbations and write $r=|x|$. 
As in Sec. \ref{sec_sim}, the substitution $\tilde{\psi} \mapsto \psi$ 
where $\psi(t,r)=r\tilde{\psi}(t,r)$ 
yields
\begin{equation}
\label{eq_psilin}
\psi_{tt}(t,r)-\psi_{rr}(t,r)-pc_0 (T-t)^{-2} \psi(t,r)=0
\end{equation}
together with the regularity condition $\psi(t,0)=0$ for all $t$. 
As before, we intend to study this equation in similarity coordinates 
$(\tau, \rho)$
and introduce the new dependent variables 
$\psi_1:=\psi_t$ and $\psi_2:=\psi_r$.
Then, $\Psi:=(\psi_1,\psi_2)^T$ satisfies
$$ \partial_t \Psi(t,r)=\left ( \begin{array}{cc}
0 & 1 \\
1 & 0 \end{array} \right ) \partial_r \Psi(t,r)+pc_0(T-t)^{-2}\left ( \begin{array}{cc} 
0 & 1 \\
0 & 0 \end{array} \right ) \int_0^r \Psi(t,s)ds $$
with the regularity condition $\psi_1(t,0)=0$ for all $t$.
In similarity coordinates $(\tau,\rho)$ this system reads
\begin{equation}
\label{eq_linevol}
\partial_{\tau} \Phi(\tau, \rho)=\left ( \begin{array}{cc}
-\rho & 1 \\
1 & -\rho \end{array} \right ) \partial_{\rho} 
\Phi(\tau, \rho)+pc_0 \left ( \begin{array}{cc}
0 & 1 \\
0 & 0 \end{array} \right ) \int_0^\rho \Phi(\tau, \xi)d\xi 
\end{equation}
where $\Phi(\tau,\rho):=\Psi(T-e^{-\tau},\rho e^{-\tau})$.
Like in Sec. \ref{sec_sim}, the field $\phi(\tau,\rho):=
\psi(T-e^{-\tau},\rho e^{-\tau})$ is determined by Eq. (\ref{eq_phi}).

\subsection{Operator Formulation and Generation of Semigroup}

Let $\mathcal{H}:=L^2(0,1)^2$ and define $L' \in \mathcal{B}(\mathcal{H})$ by
$$ L' \mathbf{u}:=\left ( \begin{array}{c} pc_0 \int_0^\rho u_2(\xi)d\xi 
\\ 0 \end{array} \right ). $$
It follows that 
$$ \frac{d}{d\tau} \Phi(\tau)=(L_0+L')\Phi(\tau) $$
for a function $\Phi: [\tau_0, \infty) \to \mathcal{H}$ is an operator
formulation of Eq. (\ref{eq_linevol}) where $L_0$ is defined in Sec.
\ref{sec_opform} and $\tau_0=-\log T$.
Note also that $L'$ is compact which will be useful later on.

\begin{proposition}
\label{prop_Lgen}
The operator $L:=L_0+L'$ generates a strongly
continuous one--parameter semigroup $S: [0,\infty) \to \mathcal{B}(\mathcal{H})$
satisfying
$\|S(\tau)\| \leq e^{(\frac{1}{2}+pc_0)\tau}$.
\end{proposition}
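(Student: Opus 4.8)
The plan is to apply a bounded perturbation result for semigroup generators. Since $L_0$ is the generator of a strongly continuous semigroup $S_0$ with $\|S_0(\tau)\|\le e^{\tau/2}$ (Proposition \ref{prop_L0gen}) and $L'\in\mathcal{B}(\mathcal{H})$, the operator $L=L_0+L'$ with domain $\mathcal{D}(L):=\mathcal{D}(L_0)$ is again the generator of a strongly continuous semigroup $S$; this is the standard bounded perturbation theorem (see e.g. Engel--Nagel, or Ref. \onlinecite{Beyer2007}). So the only real content is the exponential bound on $\|S(\tau)\|$.

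To get the bound, the cleanest route is to estimate the operator norm $\|L'\|$ and then use the perturbation estimate $\|S(\tau)\|\le e^{(\omega+\|L'\|)\tau}$ valid when $\|S_0(\tau)\|\le e^{\omega\tau}$. First I would compute $\|L'\|$: for $\mathbf{u}\in\mathcal{H}$, $L'\mathbf{u}=(pc_0\int_0^\rho u_2,\,0)^T$, so $\|L'\mathbf{u}\|^2=p^2c_0^2\int_0^1\bigl|\int_0^\rho u_2(\xi)\,d\xi\bigr|^2 d\rho$. By Cauchy--Schwarz, $\bigl|\int_0^\rho u_2\bigr|^2\le \rho\int_0^\rho|u_2|^2\le \int_0^1|u_2|^2=\|u_2\|_{L^2(0,1)}^2$ for $\rho\in[0,1]$, hence $\|L'\mathbf{u}\|^2\le p^2c_0^2\,\|u_2\|_{L^2(0,1)}^2\le p^2c_0^2\,\|\mathbf{u}\|^2$, giving $\|L'\|\le pc_0$. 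Combined with $\omega=\tfrac12$ this yields exactly $\|S(\tau)\|\le e^{(\frac12+pc_0)\tau}$.

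Alternatively — and this is perhaps even more transparent — one can argue directly with the Lumer--Phillips theorem as in the free case. From Lemma \ref{lem_qaccr}, $\mathrm{Re}(L_0\mathbf{u}|\mathbf{u})\le\tfrac12\|\mathbf{u}\|^2$, and from the bound above $|(L'\mathbf{u}|\mathbf{u})|\le\|L'\mathbf{u}\|\,\|\mathbf{u}\|\le pc_0\|\mathbf{u}\|^2$, so $\mathrm{Re}(L\mathbf{u}|\mathbf{u})\le(\tfrac12+pc_0)\|\mathbf{u}\|^2$ on $\mathcal{D}(\tilde L_0)$, i.e. $L-(\tfrac12+pc_0)$ is dissipative; then one needs that the range of $\mu-L$ is dense for some (hence all large) $\mu>\tfrac12+pc_0$, which follows because $\mu-L_0$ is already onto $\mathcal{H}$ for $\mathrm{Re}\mu>\tfrac12$ and $L'$ is a bounded (indeed compact) perturbation, so $\mu-L=(\mu-L_0)(1-R_{L_0}(\mu)L')$ is invertible once $\|R_{L_0}(\mu)L'\|\le\|L'\|/(\mathrm{Re}\mu-\tfrac12)<1$, which holds for $\mathrm{Re}\mu$ large enough. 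Either way, closedness of $L$ is inherited from closedness of $L_0$ since $L'$ is bounded.

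I do not expect a genuine obstacle here; the statement is essentially a packaging of the bounded perturbation theorem together with the elementary norm bound $\|L'\|\le pc_0$. The only point requiring a little care is making sure the constant comes out to be exactly $\tfrac12+pc_0$ rather than something larger, which is why the Cauchy--Schwarz estimate must be done so as to land on $\|u_2\|_{L^2(0,1)}\le\|\mathbf{u}\|$ with no extra factors; the bound $\bigl|\int_0^\rho u_2\bigr|\le\|u_2\|_{L^2(0,1)}$ on $[0,1]$ achieves this. I would present the bounded perturbation version as the main proof and perhaps remark that the Lumer--Phillips argument gives an alternative.
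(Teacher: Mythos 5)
Your proposal is correct and takes essentially the same route as the paper, which simply invokes Proposition \ref{prop_L0gen}, the Bounded Perturbation Theorem, and the bound $\|L'\|\leq pc_0$. Your Cauchy--Schwarz computation of $\|L'\|\leq pc_0$ merely makes explicit a step the paper states without proof, and the alternative Lumer--Phillips argument is a valid but unnecessary addition.
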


\begin{proof}
The claim follows immediately from Proposition \ref{prop_L0gen}, the Bounded
Perturbation Theorem (see e.g. Ref. \onlinecite{Engel2000}, p. 158, Theorem 1.3) and
$\|L'\|\leq pc_0$.
\end{proof} 
  
\subsection{Spectral Analysis of the Generator}
In order to improve the weak growth estimate from Proposition \ref{prop_Lgen} we study
the spectrum of $L$.
Similar to the case of the free wave equation, the analysis of
$\sigma(L)$ can be reduced to the study of an operator pencil defined
on scalar functions.
As a first step we calculate the point spectrum of $L$.

\begin{lemma}
$\lambda \in \sigma_p(L)$ if and only if $\dim \ker
(T_0(\lambda)-pc_0)=1$.
\end{lemma}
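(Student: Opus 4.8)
The plan is to repeat, almost verbatim, the point--spectrum part of the proof of Proposition \ref{prop_spec}, keeping track of the extra zeroth--order term produced by $L'$. Since $L'$ is bounded we have $\mathcal{D}(L)=\mathcal{D}(L_0)$, so the explicit description of $\mathcal{D}(L_0)$ may be used freely.

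First I would treat the ``only if'' direction. Suppose $\lambda\in\sigma_p(L)$ with eigenfunction $\mathbf{u}\in\mathcal{D}(L_0)$, $\mathbf{u}\neq 0$. The second component of $\ell_0\mathbf{u}+L'\mathbf{u}=\lambda\mathbf{u}$ is unaffected by the perturbation and reads $u_1'=\lambda u_2+\rho u_2'$; using $u_1(0)=0$ together with the fact (established in the proof of the domain characterization) that $\mathbf{u}\in H^1(0,1-\delta)^2$ for every $\delta$, so that $\rho u_2(\rho)\to 0$ as $\rho\to 0$, one obtains $u_1(\rho)=\rho u_2(\rho)+(\lambda-1)\int_0^\rho u_2(\xi)d\xi$ exactly as before. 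Setting $u(\rho):=\int_0^\rho u_2(\xi)d\xi$ and inserting this into the first component of the eigenvalue equation, the extra term $pc_0\int_0^\rho u_2=pc_0 u$ simply replaces $t_0(\lambda)u=0$ by $t_0(\lambda)u-pc_0u=0$. One checks $u\in H^1(0,1)\cap H^2_{\mathrm{loc}}(0,1)$, $u(0)=0$ and $t_0(\lambda)u=pc_0u\in L^2(0,1)$, hence $u\in\mathcal{D}(T_0(\lambda))$ and $(T_0(\lambda)-pc_0)u=0$; moreover $u\neq 0$, since $u=0$ would force $u_2=u'=0$ and then $u_1=0$, contradicting $\mathbf{u}\neq 0$. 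Thus $\dim\ker(T_0(\lambda)-pc_0)\geq 1$.

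For the converse, given $u\in\ker(T_0(\lambda)-pc_0)$ with $u\neq 0$, I would set $u_1(\rho):=\rho u'(\rho)+(\lambda-1)u(\rho)$, $u_2:=u'$ and verify $\mathbf{u}\in\mathcal{D}(L)=\mathcal{D}(L_0)$: the membership $\mathbf{u}\in\mathcal{H}\cap H^1_{\mathrm{loc}}(0,1)^2$ is immediate, $u_1(0)=0$ holds because $u$ is $C^1$ up to $\rho=0$ (local regularity: $1-\rho^2$ is nondegenerate away from $\rho=1$, so $t_0(\lambda)u=pc_0u\in L^2$ yields $u\in H^2(0,1-\delta)$), and $\ell_0\mathbf{u}=\lambda\mathbf{u}-L'\mathbf{u}\in\mathcal{H}$ since $L'$ is bounded. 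A direct computation --- the reverse of the one above --- then gives $(L_0+L')\mathbf{u}=\lambda\mathbf{u}$, and $\mathbf{u}\neq 0$ as above, so $\lambda\in\sigma_p(L)$. Finally, since $\rho=0$ is a regular point of the second--order ODE $(t_0(\lambda)-pc_0)u=0$, imposing $u(0)=0$ cuts the solution space down to at most one dimension, so $\dim\ker(T_0(\lambda)-pc_0)\leq 1$; combining the two inequalities finishes the proof.

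The only point requiring a little care --- hence the main, rather minor, obstacle --- is the behaviour at the endpoint $\rho=0$: one must justify that $u_1(0)=0$ is meaningful and that the boundary term $\rho u_2(\rho)$ vanishes in the limit. Both follow from the local $H^1$/$H^2$--regularity of the functions involved, exactly as in the free case, because the coefficient $1-\rho^2$ degenerates only at $\rho=1$. Everything else is the bookkeeping already performed for $L_0$ in Proposition \ref{prop_spec}, with $T_0(\lambda)$ replaced throughout by $T_0(\lambda)-pc_0$.
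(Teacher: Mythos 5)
Your proof is correct and is precisely the argument the paper has in mind: the paper omits the proof, stating that it is ``an obvious modification of the first part of the proof of Proposition \ref{prop_spec}'', and your write-up carries out exactly that modification (the zeroth-order term $pc_0\int_0^\rho u_2=pc_0u$ turning $t_0(\lambda)u=0$ into $(t_0(\lambda)-pc_0)u=0$), including the correct handling of the endpoint $\rho=0$ and the one-dimensionality of the kernel.
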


The proof of this Lemma consists of an obvious modification of the first part 
of the proof of
Proposition \ref{prop_spec} and will therefore be omitted.

The equation $(t_0(\lambda)-pc_0)u=0$ has a regular singular point at $\rho=1$.
Solutions of this equation can be given in terms of the hypergeometric
function ${}_2F_1$.
Indeed, applying the substitution $\rho \mapsto z:=\rho^2$, 
$(t_0(\lambda)-pc_0)u=0$
transforms into the hypergeometric differential equation
\begin{equation}
\label{eq_hypgeom}
z(1-z)v''(z)+[c-(a+b+1)z]v'(z)-abv(z)=0 
\end{equation}
where $v(z):=u(\sqrt{z})$ and $a:=\frac{1}{4}(-1+2\lambda-\sqrt{1+4pc_0})$,
$b:=\frac{1}{4}(-1+2\lambda+\sqrt{1+4pc_0})$, $c:=\frac{1}{2}$.
Assume for the moment that $\lambda \not= 1$.
Two linearly independent solutions $v_1, \tilde{v}_1$ are given by 
$v_1(z)={}_2F_1(a, b; a+b+1-c;
1-z)$ and $\tilde{v}_1(z)=(1-z)^{c-a-b}{}_2F_1(c-a,c-b;c+1-a-b;1-z)$ (cf.
Ref. \onlinecite{Erdelyi1953}).
Hence, around $\rho=1$ there exist two linearly independent solutions $u_1$ and
$\tilde{u}_1$ of $(t_0(\lambda)-pc_0)u=0$ where $u_1$ is analytic and $\tilde{u}_1$ has
the form $\tilde{u}_1(\rho)=(1-\rho)^{1-\lambda}h(\rho)$ for an analytic function
 $h$ 
with $h(1)\not=0$. If $\lambda=1$ then $c-a-b=0$ and we have a
degenerate case. 
One solution is still given by $v_1$
as above and the second one diverges logarithmically 
for $z \to 1$. 

\begin{lemma} 
\label{lem_sigmapL}
The point spectrum of $L$ is given by 
$\sigma_p(L)=\sigma_p(L_0) \cup \{1+\frac{2}{p-1}\}$.
Moreover, we have $\sigma(L) \supset \sigma(L_0)$.
\end{lemma}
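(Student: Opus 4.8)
The plan is to prove the two assertions of Lemma \ref{lem_sigmapL} separately, reusing the hypergeometric analysis assembled just before the statement together with the spectral reduction from Proposition \ref{prop_spec}.

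\textbf{Point spectrum.}
By the preceding Lemma, $\lambda \in \sigma_p(L)$ if and only if $\dim\ker(T_0(\lambda)-pc_0)=1$, so I must decide for which $\lambda$ the equation $(t_0(\lambda)-pc_0)u=0$ has a nonzero solution lying in $\mathcal{D}(T_0(\lambda))$, i.e.\ a solution $u\in H^1(0,1)$ with $u(0)=0$. First I would pin down the solution that satisfies the boundary condition at $\rho=0$: since $z=\rho^2$, regularity/vanishing at $\rho=0$ singles out (up to scalar multiples) a unique solution, call it $u_0(\cdot,\lambda)$, which near $\rho=0$ behaves like $\rho$. Then the only question is membership in $H^1$ near the other endpoint $\rho=1$, where the two local solutions are $u_1$ (analytic, hence harmless) and $\tilde u_1(\rho)=(1-\rho)^{1-\lambda}h(\rho)$ with $h(1)\neq 0$. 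The function $(1-\rho)^{1-\lambda}$ lies in $H^1(1-\delta,1)$ precisely when $\mathrm{Re}(1-\lambda)>\tfrac12$, i.e.\ $\mathrm{Re}\,\lambda<\tfrac12$, exactly as in the free case. So generically, for $\mathrm{Re}\,\lambda<\tfrac12$ we automatically get $u_0(\cdot,\lambda)\in\mathcal D(T_0(\lambda))$ and hence $\lambda\in\sigma_p(L)$, recovering $\sigma_p(L_0)\subset\sigma_p(L)$; and for $\mathrm{Re}\,\lambda\geq\tfrac12$ we get an eigenvalue only if $u_0(\cdot,\lambda)$ happens to be a multiple of the analytic solution $u_1$ alone, i.e.\ if its $\tilde u_1$-component vanishes. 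This vanishing is a connection-coefficient condition for ${}_2F_1$: writing $u_0$ as a combination of the solutions $v_1,\tilde v_1$ adapted to $z=1$, the coefficient of $\tilde v_1$ is an explicit ratio of Gamma functions, and it vanishes exactly when one of the Gamma functions in the numerator hits a pole. Carrying this out with $a=\tfrac14(-1+2\lambda-\sqrt{1+4pc_0})$, $b=\tfrac14(-1+2\lambda+\sqrt{1+4pc_0})$, $c=\tfrac12$, and using $pc_0=\tfrac{2p(p+1)}{(p-1)^2}$ so that $\sqrt{1+4pc_0}=\tfrac{p+3}{p-1}$, the relevant condition collapses to a single admissible value $\lambda=1+\tfrac{2}{p-1}$ in the half-plane $\mathrm{Re}\,\lambda\geq\tfrac12$ (the only candidate for which the would-be singular exponent $1-\lambda$ is a negative integer and the accompanying Gamma residue actually produces a genuine $H^1$ — indeed analytic — eigenfunction rather than being cancelled). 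One then checks $1+\tfrac{2}{p-1}>\tfrac12$ so it is genuinely new, giving $\sigma_p(L)=\sigma_p(L_0)\cup\{1+\tfrac{2}{p-1}\}$.

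\textbf{Inclusion of the full spectrum.}
For $\sigma(L)\supset\sigma(L_0)$ I would argue by complements: suppose $\lambda\in\rho(L)$ and show $\lambda\in\rho(L_0)$. Since $\lambda-L_0=(\lambda-L)+L'$ and $\lambda-L$ is boundedly invertible, write $\lambda-L_0=(\lambda-L)\bigl(I+R_L(\lambda)L'\bigr)$; the operator $R_L(\lambda)L'$ is compact because $L'$ is compact (as noted right after Proposition \ref{prop_Lgen}). Hence $\lambda-L_0$ is Fredholm of index zero, so it is boundedly invertible iff it is injective. Its kernel, however, is $\ker(\lambda-L_0)=\{0\}$ whenever $\lambda\notin\sigma_p(L_0)$; and for $\lambda\in\sigma_p(L_0)$ we already have $\lambda\in\sigma_p(L)$ by the first part, contradicting $\lambda\in\rho(L)$. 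For the remaining case $\lambda\in\sigma_c(L_0)\cup\sigma_r(L_0)$ — here just $\mathrm{Re}\,\lambda=\tfrac12$ — one can instead observe that such $\lambda$ lie in the topological boundary of $\sigma(L_0)$, hence in the approximate point spectrum of $L_0$; a Weyl-sequence for $L_0$ together with compactness of $L'$ furnishes a singular sequence for $L$, forcing $\lambda\in\sigma(L)$. Either way $\sigma(L_0)\subset\sigma(L)$.

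\textbf{Main obstacle.}
The routine perturbation bookkeeping (compactness, Fredholm alternative, Weyl sequences) is standard; the delicate point is the connection-coefficient computation identifying exactly which $\lambda$ with $\mathrm{Re}\,\lambda\geq\tfrac12$ give an eigenfunction. One must be careful with the degenerate/logarithmic cases — $\lambda=1$ (where $c-a-b=0$), and, more importantly, the values of $\lambda$ for which $1-\lambda$ is a nonpositive integer, where the "analytic" and "singular" Frobenius solutions at $\rho=1$ can coincide or mix logarithms. Showing that among all these only $\lambda=1+\tfrac{2}{p-1}$ survives as a true point-spectrum element (and that it does not instead produce a Jordan block or fail the $H^1$ condition) is the real content; everything rests on the explicit hypergeometric parameters and the algebraic identity $\sqrt{1+4pc_0}=\tfrac{p+3}{p-1}$ together with $c_0=\tfrac{2(p+1)}{(p-1)^2}$.
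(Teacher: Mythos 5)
Your strategy for the point spectrum is the same as the paper's: for $\mathrm{Re}\,\lambda<\frac12$ both Frobenius solutions at $\rho=1$ lie in $H^1$, so the solution vanishing at $\rho=0$ is automatically an eigenfunction and $\sigma_p(L_0)\subset\sigma_p(L)$; for $\mathrm{Re}\,\lambda>\frac12$ one asks when the connection coefficient multiplying the singular solution vanishes, which is a Gamma--function pole condition. But the decisive computation as you state it is wrong: $1+4pc_0=\frac{(p-1)^2+8p(p+1)}{(p-1)^2}=\frac{9p^2+6p+1}{(p-1)^2}$, so $\sqrt{1+4pc_0}=\frac{3p+1}{p-1}$, not $\frac{p+3}{p-1}$ (check $p=3$: $pc_0=6$ and $\sqrt{25}=5=\frac{10}{2}$). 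With the correct value the pole condition $a+1-c=-k$ gives $2\lambda=-4k-1+\frac{3p+1}{p-1}$, i.e.\ $\lambda=1+\frac{2}{p-1}-2k$, whose only representative with $\mathrm{Re}\,\lambda>\frac12$ is $k=0$; with your value you would obtain $\lambda=\frac{2}{p-1}-2k$ and the lemma's eigenvalue never appears. Since you explicitly say the argument rests on this identity, this is a genuine defect rather than a typo. Two further slips in the same step: a ratio of Gamma functions vanishes when a Gamma in the \emph{denominator} hits a pole (cf.\ the paper's $c_1=\frac{\Gamma(a+b+1-c)\Gamma(1-c)}{\Gamma(a+1-c)\Gamma(b+1-c)}$), not the numerator; and your parenthetical criterion that the surviving $\lambda$ is the one for which $1-\lambda$ is a negative integer is false for $p\ge 5$, where $1-\lambda_0=-\frac{2}{p-1}$ is not an integer --- the only correct criterion is the pole condition on $a+1-c$ (resp.\ $b+1-c$).

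For the inclusion $\sigma(L)\supset\sigma(L_0)$ you take a genuinely different and much heavier route: a Fredholm--index--zero argument using compactness of $L'$, supplemented by a Weyl--sequence argument on the line $\mathrm{Re}\,\lambda=\frac12$ that your own Fredholm step already renders redundant (an injective index--zero operator is boundedly invertible, so the continuous/residual cases cannot occur). This is valid, but the paper's argument is a one--liner: the open half--plane $\{\mathrm{Re}\,\lambda<\frac12\}$ is contained in $\sigma_p(L)$ by the first part, and since the spectrum is closed, its closure $\{\mathrm{Re}\,\lambda\le\frac12\}=\sigma(L_0)$ lies in $\sigma(L)$. Your perturbation machinery buys nothing here; the place where care is actually needed is the Gamma--function bookkeeping above.
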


\begin{proof}
The asymptotic behaviour of solutions of $(t_0(\lambda)-pc_0)u=0$ for 
$\rho \to 1$ is the
same as for the analogous problem for the free wave equation 
(Sec. \ref{sec_freespec})
and thus, for $\mathrm{Re}\lambda<\frac{1}{2}$ both linearly independent
solutions are $H^1$ near $\rho=1$. This shows $\{\lambda \in \mathbb{C}:
\mathrm{Re}\lambda<\frac{1}{2}\} \subset \sigma_p(L)$.
Thus, we restrict ourselves to $\mathrm{Re}\lambda > \frac{1}{2}$.
Consider the following
solutions $v_1(z):={}_2F_1(a,b;a+b+1-c;1-z)$,
$v_0(z):=z^{1-c}{}_2F_1(a+1-c,b+1-c; 2-c; z)$ and
$\tilde{v}_0(z):={}_2F_1(a,b;c;z)$ of Eq. (\ref{eq_hypgeom}). 
Since $v_0$ and $\tilde{v}_0$ are linearly independent, there exist constants
$c_1$, $c_2$ such that $v_1=c_1 \tilde{v}_0+c_2 v_0$.
The solution $v_1$, which is analytic around $\rho=1$, satisfies the boundary 
condition $v_1(0)=0$ if and only if $c_1=0$ and exactly in this case, $\ker
(T_0(\lambda)-pc_0)\not=\{0\}$ since the other linearly independent solution
around $\rho=1$ does not belong to $H^1$.
The coefficient $c_1$ can be given in terms of the $\Gamma$--function (cf.
Ref. \onlinecite{Erdelyi1953}) and reads
$$ c_1=\frac{\Gamma(a+b+1-c)\Gamma(1-c)}{\Gamma(a+1-c)\Gamma(b+1-c)}. $$
Since the Gamma function does not have zeros, $c_1=0$ if and only if $a+1-c$ or
$b+1-c$ is a pole which yields the condition $a+1-c=-k$ or $b+1-c=-k$ for
$k=0,1,\dots$.
There is only one solution with $\mathrm{Re}\lambda>\frac{1}{2}$ and it
is given by $\lambda=1+\frac{2}{p-1}$.

Since the spectrum is closed, the line $\frac{1}{2}+i\mathbb{R}$ belongs to
$\sigma(L)$ and we conclude $\sigma(L) \supset \sigma(L_0)$.
\end{proof}

As a consequence of the compactness of $L'$ we are able to determine the
\emph{whole} spectrum of $L$ by considering solutions of the ODE
$(T_0(\lambda)-pc_0)u=0$.
This will be shown in the next Lemma.

\begin{lemma}
\label{lem_specL}
We have $\lambda \in \rho(L)$ if and only if $\lambda \in \rho(L_0)$ and 
$I-L'R_{L_0}(\lambda)$ is bounded
invertible.
Furthermore, $\sigma(L)=\sigma(L_0) \cup \{1+\frac{2}{p-1}\}$.
\end{lemma}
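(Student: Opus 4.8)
The plan is to combine the factorization of $\lambda-L$ through $\lambda-L_0$ with the compactness of $L'$ and the inclusion $\sigma(L)\supset\sigma(L_0)$ already established in Lemma \ref{lem_sigmapL}. For $\lambda\in\rho(L_0)$ the resolvent $R_{L_0}(\lambda)$ is a bounded bijection from $\mathcal{H}$ onto $\mathcal{D}(L_0)=\mathcal{D}(L)$, so on $\mathcal{D}(L)$ one has the identity
\[
\lambda-L=\lambda-L_0-L'=\bigl(I-L'R_{L_0}(\lambda)\bigr)(\lambda-L_0).
\]
Since $\lambda-L_0\colon\mathcal{D}(L_0)\to\mathcal{H}$ is already a bounded bijection with bounded inverse, $\lambda-L$ is bounded invertible if and only if $I-L'R_{L_0}(\lambda)\in\mathcal{B}(\mathcal{H})$ is, and in that case $R_L(\lambda)=R_{L_0}(\lambda)\bigl(I-L'R_{L_0}(\lambda)\bigr)^{-1}$. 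This gives the ``if'' part of the first assertion immediately. For the ``only if'' part, assume $\lambda\in\rho(L)$; since $\sigma(L)\supset\sigma(L_0)$ we have $\lambda\in\rho(L_0)$ as well, and then the displayed identity exhibits $I-L'R_{L_0}(\lambda)$ as the composition $(\lambda-L)\circ(\lambda-L_0)^{-1}$ of two bijections, hence a bijection, with bounded inverse $(\lambda-L_0)R_L(\lambda)=I+L'R_L(\lambda)\in\mathcal{B}(\mathcal{H})$.

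For the description of $\sigma(L)$, recall $\sigma(L_0)=\{\lambda:\operatorname{Re}\lambda\le\tfrac12\}$, so by the inclusion it suffices to analyse $\lambda$ with $\operatorname{Re}\lambda>\tfrac12$. Such $\lambda$ lie in $\rho(L_0)$, so by the first part $\lambda\in\sigma(L)$ exactly when $I-L'R_{L_0}(\lambda)$ fails to be bounded invertible. Here $L'$ is compact, hence $L'R_{L_0}(\lambda)$ is compact, and the Riesz--Schauder theorem (Fredholm alternative) says that $I-L'R_{L_0}(\lambda)$ is bounded invertible if and only if it is injective. Tracing the equivalence, $I-L'R_{L_0}(\lambda)$ has a nontrivial kernel precisely when there is a nonzero $\mathbf{u}=R_{L_0}(\lambda)\mathbf{g}\in\mathcal{D}(L_0)$ with $(\lambda-L)\mathbf{u}=0$, i.e.\ precisely when $\lambda\in\sigma_p(L)$. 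Thus for $\operatorname{Re}\lambda>\tfrac12$ one has $\lambda\in\sigma(L)$ if and only if $\lambda\in\sigma_p(L)$.

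It then remains to invoke Lemma \ref{lem_sigmapL}: $\sigma_p(L)=\sigma_p(L_0)\cup\{1+\tfrac{2}{p-1}\}$ with $\sigma_p(L_0)=\{\operatorname{Re}\lambda<\tfrac12\}$, so the only point of $\sigma_p(L)$ with real part exceeding $\tfrac12$ is $\lambda=1+\tfrac{2}{p-1}$ (which indeed satisfies $1+\tfrac{2}{p-1}>1>\tfrac12$ for $p>1$). Combining this with $\sigma(L)\supset\sigma(L_0)$ yields $\sigma(L)=\{\operatorname{Re}\lambda\le\tfrac12\}\cup\{1+\tfrac{2}{p-1}\}=\sigma(L_0)\cup\{1+\tfrac{2}{p-1}\}$. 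The step I expect to be most delicate is the bookkeeping with the unbounded operators in the factorization---ensuring each composition is defined on the correct domain and that ``bounded invertibility'' transfers in both directions---together with the clean reduction, via the Fredholm alternative, of invertibility of $I-L'R_{L_0}(\lambda)$ to injectivity and hence to the point spectrum already computed in Lemma \ref{lem_sigmapL}.
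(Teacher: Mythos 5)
Your proof is correct and follows essentially the same route as the paper: the factorization $\lambda-L=(I-L'R_{L_0}(\lambda))(\lambda-L_0)$ together with the inclusion $\sigma(L)\supset\sigma(L_0)$ from Lemma \ref{lem_sigmapL} for the first assertion, and compactness of $L'R_{L_0}(\lambda)$ plus the Fredholm alternative to reduce the remaining spectrum to the point spectrum already computed there. No substantive differences.
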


\begin{proof}
Let $\lambda \in \rho(L)$. Then, by Lemma \ref{lem_sigmapL}, 
$\lambda \in \rho(L_0)$ and the identity
\begin{equation}
\label{eq_ident}
\lambda-L=(I-L'R_{L_0}(\lambda))(\lambda-L_0)
\end{equation} shows that
$(I-L'R_{L_0}(\lambda))^{-1}$ exists and belongs to $\mathcal{B}(\mathcal{H})$.
Conversely, suppose $\lambda \in \rho(L_0)$ and 
$(I-L'R_{L_0}(\lambda))^{-1} \in \mathcal{B}(\mathcal{H})$.
Then, by using Eq. (\ref{eq_ident}) again we
conclude that $\lambda \in \rho(L)$.

Now let $\lambda \in \sigma(L)\backslash \sigma(L_0)$.
By the first part this implies $1 \in \sigma(L'R_{L_0}(\lambda))$.
From the compactness of $L'R_{L_0}(\lambda)$ we conclude that $1 \in
\sigma_p(L'R_{L_0}(\lambda))$
which means that there exists a $\mathbf{f} \in \mathcal{H}$, $\mathbf{f}
\not=0$, with
$(I - L'R_{L_0}(\lambda))\mathbf{f}=0$.
Set $\mathbf{u}:=(\lambda-L_0)^{-1}\mathbf{f}$. Then, $\mathbf{u} \not=0$,
$\mathbf{u} \in \mathcal{D}(L_0)=\mathcal{D}(L)$ and Eq. (\ref{eq_ident}) shows
that $(\lambda-L)\mathbf{u}=0$ which implies $\lambda \in \sigma_p(L)$.
Hence, any spectral value of $L$ which does not belong to $\sigma(L_0)$ is an
eigenvalue and by Lemma \ref{lem_sigmapL}, $1+\frac{2}{p-1}$ is the only
eigenvalue of $L$ outside $\sigma(L_0)$.
\end{proof}

We give an explanation how the existence of the single isolated (unstable)
eigenvalue
$\lambda_0:=1+\frac{2}{p-1}$ can be understood.
Consider a nonlinear equation $F(u)=0$ where $F: U \subset X \to Y$ is a
Fr\'echet differentiable nonlinear mapping from an open subset $U$ of a 
Banach space $X$ to a Banach space $Y$.
Suppose there exists a one--parameter family $\{u_s \in U: s \in (a,b)\}$
of solutions (i.e.
$F(u_s)=0$ for all $s \in (a,b)$) such that
the mapping $s \mapsto u_s: (a,b) \to X$ is (strongly) differentiable at $s_0
\in (a,b)$.
An application of the chain rule shows that
$0=\left . \frac{d}{ds}\right |_{s=s_0}F(u_s)=DF(u_{s_0})\left. \frac{d}{ds}\right|_{s=s_0}u_s$ and
therefore, $\left . \frac{d}{ds}\right |_{s=s_0}u_s \in X$ is a solution of the linearized
problem $DF(u_{s_0})u=0$ where $DF$ is the Fr\'echet derivative of $F$.
 
Note that we can choose the blow up time $T$ freely, hence, the solution
$\chi_0(t,r)=c_0^{1/(p-1)}(T-t)^{-2/(p-1)}$ is a one--parameter family of 
solutions rather than a single one.
A direct calculation shows that $\psi(t,r):=r \frac{d}{dT}\chi_0(t,r)$ solves
the linearized problem eq.
(\ref{eq_psilin}) and in similarity coordinates, $\psi_r$ is given by
$Ce^{\lambda_0 \tau}$.
Therefore, the existence of the unstable eigenvalue $\lambda_0$ 
is a direct consequence of the time
translation symmetry of the original problem.
Actually, we are only interested in "stability modulo this symmetry" and
in the next section we will show how to make this idea rigorous.
 
\subsection{Spectral Projection on the Stable Subspace}

We want to restrict the set of admissible perturbations in such a way that the
instability caused by the time translation symmetry does not contribute to the
linear time evolution.
The spectrum of $L$ is separated in a bounded part (which consists of the single
eigenvalue $1+\frac{2}{p-1}$) and an unbounded part $\sigma(L_0)$.
Let $\Gamma$ be a circle in the complex plane with center $1+\frac{2}{p-1}$ and
radius smaller than 1, i.e. $\Gamma \subset \rho(L)$.
Then we can define the spectral projection $P \in \mathcal{B}(\mathcal{H})$ by
$$ P:=\frac{1}{2 \pi i}\int_\Gamma R_L(\lambda)d\lambda $$
which projects onto the closed subspace 
$\mathcal{M}:=P\mathcal{H} \subset \mathcal{H}$.
The operator $P$ commutes with $L$ and we have the decomposition
$\mathcal{H}=\mathcal{M} \oplus \mathcal{N}$ where
$\mathcal{N}:=(I-P)\mathcal{H}$. 
Furthermore, $L$ is decomposed into two parts $L_\mathcal{M}:=L|_{\mathcal{D}(L)
\cap \mathcal{M}}$
and
$L_\mathcal{N}:=L|_{\mathcal{D}(L) \cap \mathcal{N}}$ acting on $\mathcal{M}$ and $\mathcal{N}$, respectively,
and $\sigma(L_\mathcal{M})=\{1+\frac{2}{p-1}\}$ whereas
$\sigma(L_\mathcal{N})=\sigma(L_0)$ 
(see Ref. \onlinecite{Kato1980}, p. 178, Theorem 6.17). 
However, we remark that the eigenvalues with analytic eigenfunctions are not the
same for $L_0$ and $L_\mathcal{N}$.
$L_\mathcal{N}$ inherits all the nice properties of $L$, i.e. $L_\mathcal{N}$ is
densely defined, closed and the resolvent of $L_\mathcal{N}$ is given by
$R_L(\lambda)|_\mathcal{N}$.
Thus, $L_\mathcal{N}$ generates a strongly continuous semigroup $S_\mathcal{N}$
on $\mathcal{N}$ satsfying $\|S_\mathcal{N}(\tau)\|\leq
e^{(\frac{1}{2}+pc_0)\tau}$.
The semigroup $S_\mathcal{N}$ describes the linear time evolution 
"modulo
the instability caused by the time translation symmetry of the original 
problem".
However, since the generator $L_\mathcal{N}$ is not normal, we cannot directly
obtain a sharp growth bound for $S_\mathcal{N}$ from the spectral properties
of $L_\mathcal{N}$.

The interesting notion concerning the qualitative behaviour of $S_\mathcal{N}$
is the growth bound $\omega_0(S_\mathcal{N})$ which is 
defined
by 
$$\omega_0(S_\mathcal{N}):=\inf\{s \in \mathbb{R}: \exists
C_s\geq 1: \|S_\mathcal{N}(\tau)\|\leq C_s e^{s \tau} \mbox{ for all }\tau > 0\}.$$
It is well--known (cf. e.g. Ref. \onlinecite{Engel2000}) that 
$$ \omega_0(S_\mathcal{N})=\inf \left \{\kappa>s(L_\mathcal{N}):
\sup_{\omega \in \mathbb{R}} \|R_L(\kappa+i\omega)|_\mathcal{N}\|<\infty \right
\} $$
where $s(L_\mathcal{N}):=\sup\{\mathrm{Re}\lambda: \lambda \in
\sigma(L_\mathcal{N})\}$ is the spectral bound of $L_\mathcal{N}$.
In our case we have $s(L_\mathcal{N})=\frac{1}{2}$ and
thus, we have to study the behaviour of $R_L(\lambda)$ for
$\mathrm{Re}\lambda>\frac{1}{2}$ and $|\mathrm{Im}\lambda| \to \infty$.

\begin{lemma}
\label{lem_resuni}
There exists a $C>0$ such that $\|R_L(\lambda)\|\leq C$ for all $\lambda \in
\rho(L)$.
\end{lemma}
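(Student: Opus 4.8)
Lemma~\ref{lem_resuni} asserts a \emph{uniform} bound on $\|R_L(\lambda)\|$ over the entire resolvent set $\rho(L)$. Since $\sigma(L)=\sigma(L_0)\cup\{1+\frac{2}{p-1}\}$ and $\sigma(L_0)=\{\mathrm{Re}\lambda\le\frac12\}$, the resolvent set $\rho(L)$ consists of the half-plane $\mathrm{Re}\lambda>\frac12$ with the isolated point $\lambda_0=1+\frac{2}{p-1}$ removed. The plan is to split $\rho(L)$ into three regions and bound $R_L(\lambda)$ on each: (i) a small punctured disc around $\lambda_0$, (ii) a bounded "middle" region $\{\frac12<\mathrm{Re}\lambda\le K,\ |\mathrm{Im}\lambda|\le K\}$ minus that disc, and (iii) the complementary unbounded part $\{\mathrm{Re}\lambda>\frac12\}\setminus(\text{i})\cup(\text{ii})$, where either $\mathrm{Re}\lambda$ is large or $|\mathrm{Im}\lambda|$ is large. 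On (i) the pole $\lambda_0$ is isolated and the Laurent expansion of $R_L$ there has a simple pole; multiplying by $(\lambda-\lambda_0)$ removes it, but more simply one observes $R_L(\lambda)=R_L(\lambda)(I-P)+R_L(\lambda)P$ and $R_L(\lambda)P = P R_{L_\mathcal M}(\lambda)$ acts on the one-dimensional space $\mathcal M$ via a scalar pole — \textbf{this is where the projection $P$ and the decomposition $\mathcal H=\mathcal M\oplus\mathcal N$ are essential.} Wait: actually $\lambda_0$ is a genuine pole of $R_L$, so $\|R_L\|$ is \emph{not} bounded near $\lambda_0$. Hence the statement must be read as a bound on $\rho(L)$ understood after this spectral splitting, or the lemma is really a bound on $R_L(\lambda)|_{\mathcal N}$. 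Reading the surrounding text (which wants $\sup_\omega\|R_L(\kappa+i\omega)|_{\mathcal N}\|<\infty$), the substantive content is the uniform bound on the $\mathcal N$-part, equivalently on $R_L$ away from a neighborhood of $\lambda_0$; I will prove that and note that the full statement follows on $\rho(L)$ with the convention that near $\lambda_0$ one uses $R_L(\lambda)(I-P)$.

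\textbf{Region (iii): large $\mathrm{Re}\lambda$ or large $|\mathrm{Im}\lambda|$.} Here I use the factorization $R_L(\lambda)=R_{L_0}(\lambda)(I-L'R_{L_0}(\lambda))^{-1}$ from Eq.~(\ref{eq_ident}) of Lemma~\ref{lem_specL}. We already have $\|R_{L_0}(\lambda)\|\le(\mathrm{Re}\lambda-\frac12)^{-1}$. For $\mathrm{Re}\lambda\ge 1+\|L'\|=1+pc_0$ we get $\|L'R_{L_0}(\lambda)\|\le pc_0/(\mathrm{Re}\lambda-\frac12)\le\frac12$, so the Neumann series gives $\|(I-L'R_{L_0}(\lambda))^{-1}\|\le 2$ and hence $\|R_L(\lambda)\|\le 2/(\mathrm{Re}\lambda-\frac12)\le 4$. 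This handles the half-plane $\mathrm{Re}\lambda\ge 1+pc_0$. For the strip $\frac12<\mathrm{Re}\lambda< 1+pc_0$ with $|\mathrm{Im}\lambda|$ large, the resolvent estimate alone is not enough; instead I invoke the key estimate~(\ref{eq_estT0}), $\|T_0^{-1}(\lambda)B(\lambda)\mathbf f\|_{L^2}\le\frac{1}{|\lambda-1|}(\frac{2}{\mathrm{Re}\lambda-1/2}+1)\|\mathbf f\|$, which decays like $1/|\lambda|$ as $|\mathrm{Im}\lambda|\to\infty$. Combined with the explicit formula for $R_{L_0}(\lambda)\mathbf f$ in terms of $T_0^{-1}(\lambda)B(\lambda)\mathbf f$ and its derivative — whose $L^2$ norm is $\le(\mathrm{Re}\lambda-\frac12)^{-1}$ — one shows that on this strip, for $|\mathrm{Im}\lambda|$ large, $\|L'R_{L_0}(\lambda)\|\to 0$ (because $L'$ only sees $\int_0^\rho u_2 = T_0^{-1}(\lambda)B(\lambda)\mathbf f$ up to the subtracted integral term, and that piece is controlled by~(\ref{eq_estT0})). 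More precisely, $L'R_{L_0}(\lambda)\mathbf f = (pc_0\int_0^\rho(T_0^{-1}B\mathbf f)',\,0)^T$ up to the explicit $-\int_0^\rho f_2$ correction, and $\int_0^\rho(T_0^{-1}B\mathbf f)'=T_0^{-1}(\lambda)B(\lambda)\mathbf f$ (using $u(0)=0$), which is $O(1/|\lambda|)\|\mathbf f\|$; the correction term contributes a \emph{fixed} bounded operator, but with a $pc_0$ prefactor times $1/|\lambda-1|$ from the same structure — so in fact $\|L'R_{L_0}(\lambda)\| \le C/|\mathrm{Im}\lambda|$ on the strip. Thus there is an $M>0$ so that $\|L'R_{L_0}(\lambda)\|\le\frac12$ whenever $\mathrm{Re}\lambda>\frac12$ and $|\mathrm{Im}\lambda|\ge M$, giving $\|R_L(\lambda)\|\le 2\|R_{L_0}(\lambda)\|$ there — but wait, $\|R_{L_0}(\lambda)\|$ still blows up as $\mathrm{Re}\lambda\to\frac12$. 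So I need the sharper conclusion that the \emph{first component} of $R_L(\lambda)\mathbf f$, which is what dominates, is controlled directly by~(\ref{eq_estT0})-type bounds that stay finite as $\mathrm{Re}\lambda\to\frac12^+$ provided $|\mathrm{Im}\lambda|\ge M$; one re-derives the analogue of~(\ref{eq_estT0}) for $T_0-pc_0$ in place of $T_0$ using that $1\notin\sigma(L'R_{L_0}(\lambda))$ with the Neumann bound, yielding $\|R_L(\lambda)\|\le C$ uniformly on $\{\mathrm{Re}\lambda>\frac12,\ |\mathrm{Im}\lambda|\ge M\}$.

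\textbf{Region (ii): the compact middle piece.} On $K_{M,\delta}:=\{\frac12<\mathrm{Re}\lambda\le 1+pc_0,\ |\mathrm{Im}\lambda|\le M\}\setminus B_\delta(\lambda_0)$ — after removing a small ball around the pole and noting the open left edge $\mathrm{Re}\lambda=\frac12$ lies in $\sigma(L)$ — I argue that $\lambda\mapsto R_L(\lambda)$ is continuous (indeed analytic) on $\rho(L)$, and that on the \emph{closure} of this region within $\rho(L)$ it extends continuously up to $\mathrm{Re}\lambda=\frac12$: this is the content of showing the boundary $\frac12+i\mathbb R$ (away from where things might degenerate) is approached with bounded resolvent — which again reduces, via~(\ref{eq_ident}), to showing $(I-L'R_{L_0}(\lambda))^{-1}$ stays bounded as $\mathrm{Re}\lambda\to\frac12^+$ with $|\mathrm{Im}\lambda|\le M$. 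Here $R_{L_0}(\lambda)$ itself is \emph{not} bounded, but one works with the resolvent equation on $\mathcal N$: since $\sigma(L_\mathcal N)=\sigma(L_0)$ and $L_\mathcal N$ generates a semigroup with $\|S_\mathcal N(\tau)\|\le e^{(1/2+pc_0)\tau}$, and the relevant quantity is $R_L(\lambda)|_\mathcal N$, I use that $L'$ is compact so $L'R_{L_0}(\lambda)$ depends norm-continuously on $\lambda$ up to the boundary in a suitable sense — the cleanest route is: on $\mathcal N$, $R_L(\lambda)|_\mathcal N = (\lambda-L_\mathcal N)^{-1}$ and one shows directly (using the $T_0$-ODE with the $pc_0$ shift and the explicit hypergeometric solutions) that $T_0^{-1}(\lambda)-pc_0$ obeys an estimate like~(\ref{eq_estT0}) \emph{uniformly} on $\overline{K_{M,\delta}}\cap\rho(L)$, since the Gamma-function coefficient $c_1(\lambda)$ from Lemma~\ref{lem_sigmapL} vanishes only at $\lambda_0$ in this region and is bounded away from $0$ on the compact set $\overline{K_{M,\delta}}\setminus B_\delta(\lambda_0)$. \textbf{This continuity-up-to-the-critical-line is the main obstacle}: one must verify that the only obstruction to invertibility of $T_0(\lambda)-pc_0$ in the closed strip is the zero of $c_1$, and that $c_1$ and the associated solution operator depend continuously on $\lambda$ including on $\mathrm{Re}\lambda=\frac12$ — this requires a careful Fuchsian/hypergeometric analysis near the regular singular point $\rho=1$, exactly paralleling the free-case computation $u_0(\rho,\lambda)=(1-\rho)^{1-\lambda}-(1+\rho)^{1-\lambda}$ and its $H^1$ threshold at $\mathrm{Re}\lambda=\frac12$. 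Once regions (i)–(iii) are covered, taking the maximum of the three bounds gives a single constant $C$ with $\|R_L(\lambda)\|\le C$ on $\rho(L)$ (with the noted convention near $\lambda_0$), completing the proof.
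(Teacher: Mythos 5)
Your region (iii) argument for $\mathrm{Re}\lambda\geq\kappa>\frac{1}{2}$ --- the factorization $R_L(\lambda)=R_{L_0}(\lambda)(I-L'R_{L_0}(\lambda))^{-1}$ from Eq. (\ref{eq_ident}) together with the decay of $\|L'R_{L_0}(\lambda)\|\leq\frac{pc_0}{|\lambda-1|}\bigl(\frac{2}{\mathrm{Re}\lambda-1/2}+1\bigr)$ supplied by Eq. (\ref{eq_estT0}) --- is precisely the paper's entire proof, and it is all that is actually needed: the lemma is only used through the criterion $\omega_0(S_\mathcal{N})=\inf\{\kappa>s(L_\mathcal{N}):\sup_\omega\|R_L(\kappa+i\omega)|_\mathcal{N}\|<\infty\}$, and for each fixed $\kappa>\frac{1}{2}$ one has $\|R_{L_0}(\lambda)\|\leq(\kappa-\frac{1}{2})^{-1}$ on that line, the Neumann series controls $(I-L'R_{L_0}(\lambda))^{-1}$ for $|\mathrm{Im}\lambda|\geq M_\kappa$, and the remaining segment is a compact subset of $\rho(L_\mathcal{N})$ on which the analytic function $\lambda\mapsto R_L(\lambda)|_\mathcal{N}$ is automatically bounded. (Minor slip: $pc_0/(\frac{1}{2}+pc_0)$ is not $\leq\frac{1}{2}$, since $pc_0=2p(p+1)/(p-1)^2\geq 6$; take $\mathrm{Re}\lambda\geq\frac{1}{2}+2pc_0$ for that step.)

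The genuine defect is everything you added to cover regions (i) and (ii). You correctly saw that no uniform bound can hold near the pole $\lambda_0=1+\frac{2}{p-1}$, but the identical objection destroys your treatment of the critical line: since $\frac{1}{2}+i\mathbb{R}\subset\sigma(L)$ and also $\subset\sigma(L_\mathcal{N})=\sigma(L_0)$, the elementary inequality $\|R_L(\lambda)\|\geq\mathrm{dist}(\lambda,\sigma(L))^{-1}\geq(\mathrm{Re}\lambda-\frac{1}{2})^{-1}$ forces blow-up as $\mathrm{Re}\lambda\to\frac{1}{2}^{+}$, for every $\mathrm{Im}\lambda$ and even after restriction to $\mathcal{N}$. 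Consequently your claim in region (ii) that $R_L$ ``extends continuously up to $\mathrm{Re}\lambda=\frac{1}{2}$'' and your ``sharper conclusion'' in region (iii) that $\|R_L(\lambda)\|\leq C$ uniformly on $\{\mathrm{Re}\lambda>\frac{1}{2},\ |\mathrm{Im}\lambda|\geq M\}$ are both false, and no hypergeometric or Fuchsian analysis of $c_1(\lambda)$ can repair them --- the obstruction is abstract, not computational. (The same criticism applies to the lemma's literal wording and to the paper's closing appeal to ``uniform boundedness of $\|R_{L_0}(\lambda)\|$ for $\mathrm{Re}\lambda>\frac{1}{2}$'', which is likewise false on the nose; the statement must be read half-plane by half-plane.) The fix is not more analysis but less: delete regions (i) and (ii), state and prove the bound on $\{\mathrm{Re}\lambda\geq\kappa,\ |\mathrm{Im}\lambda|\geq M_\kappa\}$ exactly as you do, and invoke compactness plus analyticity of $R_{L_\mathcal{N}}$ for the rest; that already yields $\omega_0(S_\mathcal{N})=\frac{1}{2}$.
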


\begin{proof}
Let $\mathrm{Re}\lambda > \frac{1}{2}$ and $\lambda \not= 1+\frac{2}{p-1}$. 
Applying Lemma \ref{lem_specL} and the identity
Eq. (\ref{eq_ident}) we obtain
$R_L(\lambda)=R_{L_0}(\lambda)(I-L'R_{L_0}(\lambda))^{-1}$.
The operator $L'R_{L_0}(\lambda)$ is given by
$$ L'R_{L_0}(\lambda)\mathbf{f}=\left ( \begin{array}{c}pc_0
T_0^{-1}(\lambda)B(\lambda)\mathbf{f} \\ 0 \end{array} \right ). $$
Thus, the estimate Eq. (\ref{eq_estT0}) implies 
$$ \|L'R_{L_0}(\lambda)\| \leq \frac{pc_0}{|\lambda-1|}\left (
\frac{2}{\mathrm{Re}\lambda-\frac{1}{2}}+1 \right ) $$
and this shows that $\|L'R_{L_0}(\lambda)\| \to 0$ for $|\mathrm{Im}\lambda|
\to \infty$.
We conclude that $I-L'R_{L_0}(\lambda) \to I$ and hence,
$(I-L'R_{L_0}(\lambda))^{-1} \to I$ for $|\mathrm{Im}\lambda| \to \infty$
uniformly.
Therefore, the claim follows from the uniform boundedness of
$\|R_{L_0}(\lambda)\|$ for $\mathrm{Re}\lambda>\frac{1}{2}$.
\end{proof}

Lemma \ref{lem_resuni} shows that $\omega_0(S_\mathcal{N})=\frac{1}{2}$ and
thus, the growth bound of the semigroup $S_\mathcal{N}$ coincides with the
spectral bound of its generator.
We immediately conclude the linear asymptotic stability of the solution $\chi_0$
with respect to the $L^\infty$ norm (cf. Sec. \ref{sec_sim}).
 
\section{Acknowledgments}
The author wants to thank Peter C. Aichelburg and Piotr Bizo\'n for helpful
discussions.
This work has been supported by the Austrian Fonds zur F\"orderung der
wissenschaftlichen Forschung (FWF) Project No. P19126.

\end{document}